%% file: main.tex
\documentclass[conference]{IEEEtran}

\IEEEoverridecommandlockouts
\usepackage{cite}
\usepackage{amsmath,amssymb,amsfonts}
\usepackage{algorithmic}
\usepackage{graphicx}
\usepackage{textcomp}
\usepackage{xcolor}
\def\BibTeX{{\rm B\kern-.05em{\sc i\kern-.025em b}\kern-.08em
    T\kern-.1667em\lower.7ex\hbox{E}\kern-.125emX}}

\input{mydefs}

\begin{document}
\bstctlcite{IEEEexample:BSTcontrol}

\title{\method: Locally Differentially Private Graph Synthesis 
by Exploiting Neighborhood Structure
}

\author{
\IEEEauthorblockN{
Jiawei Dong\IEEEauthorrefmark{1},
Zhikun Zhang\IEEEauthorrefmark{1},
Quan Yuan\IEEEauthorrefmark{1}\IEEEauthorrefmark{4},
Zhe Liu\IEEEauthorrefmark{1},
Yunjun Gao\IEEEauthorrefmark{1}\IEEEauthorrefmark{2}
}

\IEEEauthorblockA{
\IEEEauthorrefmark{1}\textit{Zhejiang University},
\IEEEauthorrefmark{4}\textit{Hong Kong Baptist University}, %
\textit{\{jwdong, zhikun, yq21, zhe.liu, gaoyj\}@zju.edu.cn}
}
\thanks{\IEEEauthorrefmark{2}Corresponding author.}
}

\maketitle

\begin{abstract}
The widespread application of graph data inevitably brings significant privacy risks, as its unprotected use can lead to the leakage of sensitive information. 
These risks are particularly acute in the setting with an untrusted curator, where the data remains decentralized and each user only holds the connections to their neighbors.
To mitigate such privacy risks, we adopt \textit{local differential privacy} (LDP) to collect users' private information and generate a synthetic graph.
However, existing methods suffer from either excessive noise injection by perturbing the local adjacency lists or significant structural information loss due to the simplistic graph encoding process.
To address these issues, we propose \method, an effective graph synthesis algorithm that takes one step further by exploiting neighborhood structures under LDP.
To obtain neighborhood statistics beyond degrees, \method aggregates a noisy global view from perturbed adjacency lists and combines it with projected local connections to estimate node-level triangle counts.
To correct the structural inconsistency caused by separately perturbing degree and triangle count, \method jointly refines them into feasible degree-triangle targets.
To reconstruct a global graph from these estimated targets, \method adopts a triangle-first strategy that first preserves local clustering structures and then fulfills remaining degree requirements.
Extensive experiments on four real-world datasets and multiple commonly used graph metrics validate the superiority of \method.
Source code is available at https://github.com/ZJU-TrustAID/LDPGraph.
\end{abstract}

\begin{IEEEkeywords}
differential privacy, graph synthesis
\end{IEEEkeywords}

\section{Introduction}
\label{sec:introduction}

As a data structure specifically designed to describe and manage relationships between entities, graph data is playing a pivotal role in numerous real-world systems, such as social networks~\cite{leskovec2010signed, jiang2016big, sekara2016fundamental}, 
advertising recommendations~\cite{leskovec2007dynamics, yin2023next},
bioinformatics~\cite{Barabasi2004Network,zitnik2018modeling}, \etc
In the real world, a large amount of graph data is not centrally available; instead, they are distributed across individual users who hold only a local view of their immediate connections, such as personal mobile contact lists, face-to-face interactions, or peer-to-peer communications.
Directly collecting this sensitive data for centralized analysis is infeasible due to a lack of user trust and significant privacy concerns~\cite{zhu2023blink,sun2019analyzing}.
This reality challenges traditional privacy-preserving techniques (\eg, anonymization~\cite{liu2008towards,cheng2010k,wang2014high} or standard differential privacy~\cite{dwork2006calibrating,dwork2014algorithmic}) which typically operate under the strong and often impractical assumption of a trusted central curator that has aggregated the complete graph.

To address this issue, \textit{local differential privacy} (LDP)~\cite{Kasiviswanathan2011What, Duchi2013Local} has been adopted as the de facto mechanism for the collection and analysis of private graph data.
Under the LDP setting, each user perturbs their own data directly on their local device before transmission, ensuring that sensitive raw information never leaves their control.
Existing studies on LDP-based graph analysis mainly focus on designing tailored algorithms for specific analysis tasks,
such as triangle counting~\cite{imola2021locally,imola2022communication,wei2025privacy}, 
subgraph counting~\cite{sun2019analyzing},
community detection~\cite{hou2023wdt, fu2024community}, \etc
Differentially private data synthesis provides a general paradigm for supporting diverse downstream analyses and has been explored in a range of prior studies~\cite{zhang2021privsyn, wang2023privtrace, du2023LDPTrace, Hu2024Real,yuan2025privload,Ye2026PrivAGS}.
Our paper extends this idea to graph data under LDP by collecting statistics from users and generating a synthetic graph.
The newly generated graph preserves the essential properties of the original graph with LDP guarantee. 
This paradigm offers a more flexible and scalable alternative to existing approaches by enabling a wide range of downstream analyses from a privacy-preserving output.

With a limited local view, it is challenging to obtain effective topological information from the global graph that is essential to construct a synthetic graph.
Without external information, each user 
cannot access
the graph structure beyond their local neighborhood.
This naturally yields two fundamental types of property each user can report: the degree representing the number of its neighbors and the edge set identifying its neighbors.
These two basic information can inspire two intuitive strategies for graph synthesis under LDP.

\mypara{Existing Solutions}
The first is \textit{degree-based generation} (\dbg), where
each user perturbs and reports only their degree.
A curator then feeds these noisy degrees into a graph generation model,
resulting in a synthetic graph that approximates the original degree distribution.
While \dbg is efficient in terms of communication cost and privacy budget, this method suffers from severe information loss, since it reduces the connectivity of each user to a single scalar.
Another strategy is \textit{adjacency list-based generation} (\albg), where each user perturbs and reports its adjacency list.
Intuitively, by capturing the most fine-grained information, \albg could achieve a perfect reconstruction of the original graph.
However, the high dimensionality of adjacency lists introduces excessive noise under a practical privacy budget.

To address the dilemma of ``too little information'' or ``too much noise,'' Qin \etal~\cite{qin2017ldpgen} proposed \gen by operating on intermediate-level information (\ie, clustering).
This intermediate representation captures more structural details than \dbg, while maintaining a lower information dimension compared to \albg.
However, the accuracy of their clustering operation under perturbation is typically low, and the reliance on simple degree sequences still discards too much of the original graph's essential structural information. 

\mypara{Our Proposal}
The limitations of previous studies motivate us to rethink what structural information can be effectively captured from the rigid perspective of LDP.
As established in the LDP setting for graphs, each user's knowledge is confined to their immediate one-hop connections.
The degree representation is too simplistic, while the high-dimensional adjacency list is fragile to the noise.
We observe that the key lies in looking just one step further: beyond the edges connecting a user to their neighbors, we need the connection information between the neighbors themselves.  
We call this richer, more detailed local topology the \textit{neighborhood structure}. 
This structure can be powerfully summarized by two fundamental statistics: the degree representing the neighborhood's size and the triangle count representing its internal density.
Capturing both statistics is critical for reconstructing graphs with realistic neighborhood structures.
However, this introduces significant challenges rooted in the LDP model.

Firstly, neighborhood structure is difficult to estimate under LDP.
Each user only has direct access to their own one-hop view, while the connectivity among their neighbors depends on edges between other users.
Such information is not locally available and must be inferred from privatized reports.
This makes accurate neighborhood structure estimation challenging.
Furthermore, the estimated neighborhood structures may become internally inconsistent after perturbation.
It is characterized by degree and triangle count, which are inherently coupled.
However, LDP noise is injected into these statistics separately.
Consequently, they may no longer correspond to any feasible local structure.
Therefore, these statistics must be jointly refined before they can serve as reconstruction targets.
The final challenge lies in reconstructing a complete graph from the refined neighborhood structures.
Even if refined targets are accurate, they do not specify the detailed edge set.
The reconstruction algorithm must satisfy node-level degree requirements and high-order triangle constraints simultaneously, resulting in a globally valid graph.

To address these issues, we propose \method, consisting of a carefully designed two-round information collection protocol and a subsequent graph reconstruction algorithm.
To privately obtain the neighborhood structure information, 
which is impossible with local one-hop data alone, \method collects a noisy global view in the first round.
In the second round, \method addresses the high sensitivity of triangle counting by employing a projection mechanism. 
In particular, a threshold optimization mechanism is designed to effectively reduce the noise impact while mitigating the error introduced by information loss.
Users then combine this noisy global view with their projected local connections to estimate triangle counts.
Furthermore, to ensure the statistical feasibility of the collected data, \method corrects the inconsistency between the noisy degrees and triangle counts.
Based on these privatized and refined statistics, \method then employs a hierarchical reconstruction strategy to synthesize the entire graph. 
The primary objective of this reconstruction phase is to recover the neighborhood structure by matching the degrees and triangle counts.
By building these fundamental local structures, the synthesized graph can effectively reproduce crucial topological properties of the original graph, such as transitivity and modularity.

\mypara{Evaluation}
To validate the effectiveness of \method, we conduct experiments on four real-world graph datasets and multiple commonly used metrics. 
The experimental results show that \method outperforms the state-of-the-art methods in most settings.
For instance, when the privacy budget is 1.0, 
\method achieves 65\% lower relative error of modularity and 80\% lower relative error of transitivity than those of the best baseline on the Facebook dataset.
Then, we conduct an ablation study on the key components of \method. 
Specifically, we investigate the impact of the threshold optimization and consistency restoration on improving the quality of collected statistics.
The results validate the effectiveness of each careful design and its crucial role in the superior performance.

\mypara{Contributions}
In summary, the main contributions of this paper are three-fold: 
\begin{itemize}
[itemsep=2pt,topsep=2pt,parsep=0pt]
    \item We advance the conceptual basis of LDP graph synthesis by taking one step further: from preserving one-hop statistics to preserving the rich two-hop neighborhood structure.
    
    \item We propose a practical method \method that utilizes a two-round protocol to collect neighborhood statistics 
    under LDP while ensuring their quality with projection threshold optimization and structural consistency restoration.
    Based on this, \method introduces a hierarchical reconstruction that not only faithfully preserves these collected local neighborhood structures, but also effectively reproduces topological properties of the original graph.
    
    \item We conduct extensive experiments %
    on multiple datasets and metrics to illustrate the effectiveness of our proposed \method.
\end{itemize}

\section{Preliminaries}
\label{sec:preliminary}

\subsection{Local Differential Privacy}
\label{subsec:local differential privacy}
The concept of \textit{differential privacy} (DP)~\cite{dwork2006calibrating} has emerged as the gold standard for privacy-preserving data analysis. 
In the traditional centralized setting, a trusted data curator holds the entire original dataset and injects carefully calibrated noise into the query results before releasing them. 

\begin{definition}[$\varepsilon$-Differential Privacy]
{An algorithm $\mathcal{M}$ satisfies $\varepsilon$-differential privacy, where $\varepsilon>0$, if and only if for any two neighboring datasets $D$ and $D^{\prime}$, the following holds:} 
\begin{equation*}
    \forall T \subseteq Range (\mathcal{M}):\Pr {\mathcal{M}(D) \in T }
    \le e^{\varepsilon} \Pr {\mathcal{M}(D^{\prime}) \in T},
\end{equation*}
where Range $(\mathcal{M})$ denotes the set of all possible outputs of the algorithm $\mathcal{M}$, and $D$, $D^{\prime}$ satisfy $D = D^{\prime} + r $ or $ D^{\prime} = D + r$.
\end{definition}

However, the trusted curator assumption does not hold in most real-world scenarios. 
\textit{Local differential privacy} (LDP)~\cite{Kasiviswanathan2011What} addresses this by shifting the responsibility for data perturbation from the server to the individual user. 
Under the LDP setting, each user applies a randomization mechanism to their own data before sending it to the untrusted curator. 
This ensures that the sensitive raw data never leaves the user's device, providing a much stronger privacy guarantee.
Formally, the definition of LDP is as follows:
\begin{definition}[$\varepsilon$-Local Differential Privacy]
A mechanism $\mathcal{M}$ satisfies $\varepsilon$-LDP if for any two input $v_1, v_2$ from the input domain, and for any possible output $y$, the following holds:
\begin{equation*}
    \forall y \subseteq Range (\mathcal{M}):\Pr {\mathcal{M}(v_1) = y} \leq e^{\varepsilon} \Pr {\mathcal{M}(v_2) = y}.
\end{equation*}
\end{definition}
Essentially, LDP ensures an untrusted curator observing output $y$ cannot confidently infer if the user's true input was $v_1$ or $v_2$, thus protecting privacy.
Several fundamental mechanisms can serve as building blocks for complex LDP algorithms.

\mypara{Randomized Response}
\textit{Randomized response} (RR)~\cite{warner1965randomized} is a representative mechanism to perturb \textit{categorical data}. 
For a user with a binary value $v \in \{0, 1\}$, RR works as follows:
\begin{equation*}
\label{eq:rr}
\Pr{\mathcal{M}_{RR}(v) = v'} = 
\begin{cases} 
    p, & 	\text{if}~ v'= 1-v \\
    1-p, & 	\text{if}~ v' = v 
\end{cases}
\end{equation*}

By setting $p = \frac{e^\varepsilon}{e^\varepsilon + 1}$, this mechanism satisfies $\varepsilon$-LDP.

\mypara{Laplace Mechanism}
\textit{Laplace mechanism} (LM)~\cite{dwork2014algorithmic} 
satisfies the LDP requirements by adding random Laplace noise to the local \textit{numerical value} for each user. 
The sanitized value is $\mathcal{M}_f(D) = f(D) + Lap\left(\frac{\triangle_f}{\varepsilon}\right)$, 
where $Lap(\beta)$ stands for a random variable sampled from the Laplace distribution $\Pr {Lap(\beta)=x}=\frac{1}{2\beta}e^{-\left | x \right | / \beta}$, 
and $\triangle_f$ stands for the maximum possible change in the results of a query across any two neighbor databases
(\ie sensitivity). 

\mypara{Composition Properties of LDP}
The following composition properties of LDP are commonly used for building complex LDP algorithms from simpler subroutines.

\begin{itemize}[itemsep=0pt,topsep=2pt,parsep=0pt]
    \item \mypara{Sequential Composition}
    Combining multiple subroutines that satisfy local differential privacy for
    $\{\varepsilon_{1}, \cdots,\varepsilon_{k}\}$ results in a mechanism satisfying
    $\varepsilon$-LDP for $\varepsilon = { \sum_{i}\varepsilon_{i}} $.
       
    \item \mypara{Post-processing}
    Given an $\varepsilon$-LDP algorithm $\mathcal{A}$ and any $g$, releasing $g(\mathcal{A}(D))$ still satisfies $\varepsilon$-LDP, \ie, post-processing the output of a locally differentially private algorithm does not incur additional loss of privacy. 
\end{itemize}

\subsection{LDP Graph Analysis}
\label{subsec:ldp_graph}
In real-world graph 
scenarios, such as social, collaboration, or financial networks, the sensitive information
typically resides in
the relationships between 
two individuals.
For instance, users of social platforms often wish to conceal their connections with others.
In the centralized setting, edge DP~\cite{blocki2012johnson} is utilized to protect the presence of each connection between two users.
For a local setting where each user knows only their neighbors, we have the following formal definitions:
The local information of a node $u_i \in V$ is their \textit{adjacency list},
which can be represented as an $n$-dimensional binary vector $l_i=(b_1,\dots,b_n)$, where $b_j=1$ if an edge $(u_i, u_j)\in E$ exists, and $b_j=0$ otherwise. 
Note that by convention, $b_i$ can be considered 0. 
The definition of edge LDP is as follows:
\begin{definition}[$\varepsilon$-edge LDP]
\label{def:edge_ldp}
A mechanism $\mathcal{M}$ satisfies $\varepsilon$-edge LDP if for any node, for any two of their possible adjacency lists $l$ and $l^{\prime}$ that differ in only one bit, and for any possible output $s \in Range(\mathcal{M})$, the following holds:
$$
\Pr {\mathcal{M}(l) = s} \le e^\varepsilon \cdot \Pr {\mathcal{M}(l^{\prime}) = s}.
$$
\end{definition}

\section{Problem Definition and Existing Solutions}
\label{sec:problem_definition}

\subsection{Problem Definition}
\label{subsec:problem_definition}

The core research question of this paper is to design a mechanism that allows an untrusted curator to obtain a high-fidelity synthetic graph while protecting user privacy.
This problem can be decomposed into two main objectives:
\begin{enumerate}
    \item[(i)] A curator must be able to collect information from each node $u_i$ in a manner that satisfies edge-LDP.
    \item[(ii)] Based on the perturbed data, the curator must be able to reconstruct a synthetic graph $G'=(V,E')$.
\end{enumerate}
As a result, we obtain a synthetic graph that can be used for downstream analysis tasks without any additional loss of privacy due to the post-processing property of LDP. 
We summarize the frequently used mathematical notations in~\autoref{table:math_notations}.

The utility of the synthetic graph $G'$ can be measured from multiple perspectives, such as its ability to preserve the degree distribution, the transitivity, or the community structure of $G$. 
We will introduce the evaluation metrics in detail in~\autoref{sec:evaluation}.

\begin{table}[!t]
    \centering
    \caption{Summary of mathematical notations.}
    \label{table:math_notations}
    \footnotesize
    \setlength{\tabcolsep}{1.2em}
	\begin{tabular}{cc}
		\toprule
		\textbf{Notation} & \textbf{Description}  \\
		\midrule
		$G$ & Graph \\
		$\varepsilon$ & Privacy budget  \\
		$N$ & The number of users (nodes) \\
		$d$ & The degree sequence \\
		$l$ & The adjacency list of a user (node)  \\
		$t$ & The triangle count sequence  \\
		\bottomrule
	\end{tabular}
\end{table}

\subsection{Existing Solutions}
\label{section:existing_solutions}

\mypara{DBG}
A natural idea for graph synthesis is to collect each node's degree, referred to as \textit{degree-based generation} (\dbg).
In \dbg, each user perturbs its degree under LDP with Laplace noise, and sends the result to the curator. 
The curator then aggregates all the noisy degrees to form a sequence. 
This sequence is fed into a synthetic graph generation algorithm such as BTER~\cite{seshadhri2012community}, which results in a synthetic graph that conforms to the degree distribution of the original graph.

Apparently, \dbg suffers from severe information loss, since it reduces the connectivity information of each user to a single scalar value.
While the synthetic graph might match the overall degree distribution, it inevitably fails to preserve the local clustering or global topological structure.

\mypara{ALBG} 
An alternative is to directly collect the graph's detailed edge set, which we call \textit{adjacency list-based generation} (\albg).
Each user perturbs its adjacency list using randomized response.
To handle undirected edges efficiently and avoid double reporting, an asymmetric protocol can be employed, which means each user perturbs and transmits only half of its adjacency list.
The curator then reconstructs a noisy adjacency matrix from these reports.

Since the adjacency list includes all structural information, a sufficiently large privacy budget could allow a nearly perfect synthetic graph. 
However, such large budgets are often impractical. 
Under a normal privacy budget, the protocol would introduce excessive noise.

\mypara{LDPGen~\cite{qin2017ldpgen}} 
To strike a trade-off between information loss as in \dbg and excessive noise as in \albg, Qin \etal propose \gen.
The general idea is to learn a coarse-grained structural representation through an iterative refinement process. 
In each iteration, \gen clusters users into structurally similar groups and collects noisy node-to-group connectivity with the current partitions.
Once a stable partition that preserves the graph's structure properties is obtained, a graph generation model is applied to construct the synthetic graph. 

However, \gen still relies mainly on low-dimensional connectivity patterns and degree information.
Such information is insufficient to capture richer structures.
Furthermore, as discussed in the introduction, all the above approaches only exploit the one-hop information which cannot obtain sufficient structural information, thus struggling to generate graphs with realistic neighborhood structure.

\begin{figure*}[!t]
\centering
\includegraphics[width=0.95\textwidth] {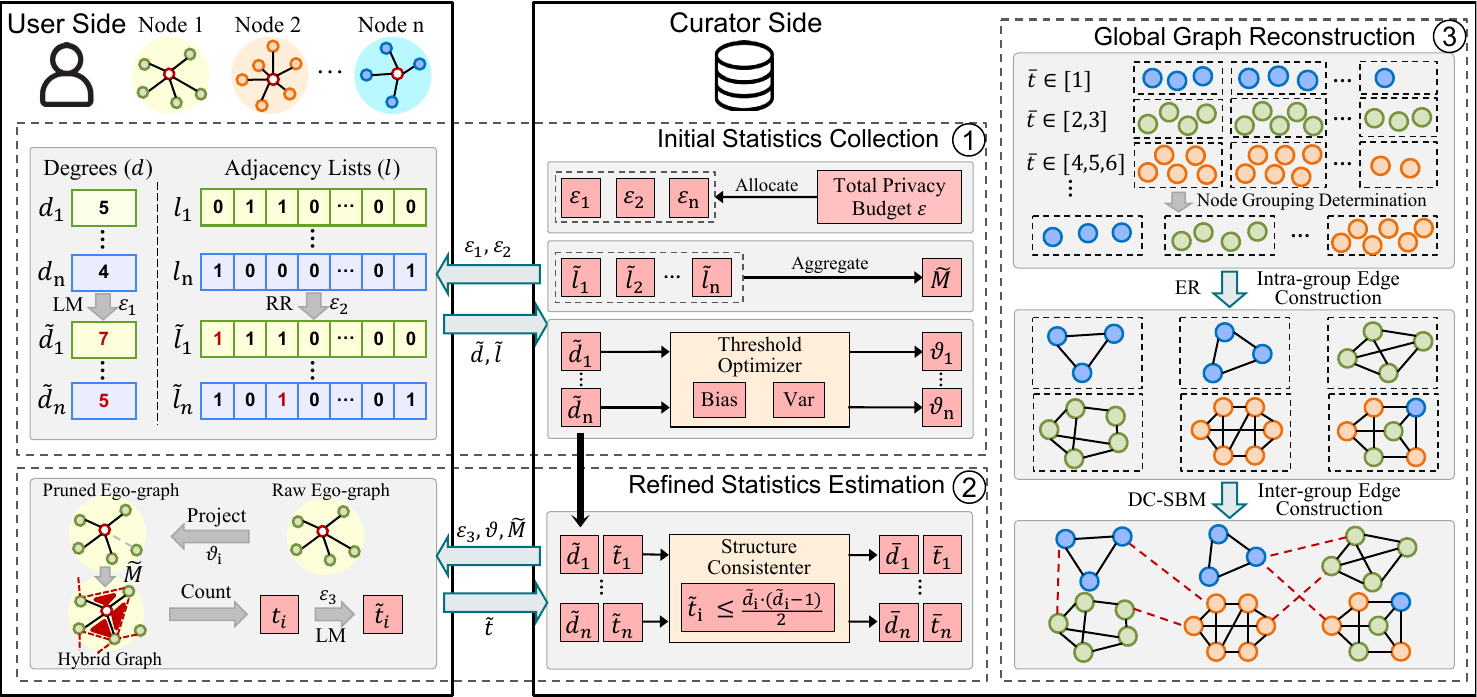}
\caption{\method overview. 
\method consists of three phases: Initial Statistics Collection, Refined Statistics Estimation, and Global Graph Reconstruction.
In Phase 1, each user reports perturbed local degrees and adjacency lists.
The curator then builds a noisy adjacency matrix and computes a projection threshold for each user.
In Phase 2, each user estimates perturbed triangle counts using the noisy adjacency matrix and projected local connections.
The curator then jointly refines noisy degrees and triangle counts into structurally consistent statistics.
In Phase 3, \method partitions nodes into groups based on refined triangle counts from Phase 2.
\method then builds dense intra-group edges to satisfy local triangle constraints, 
and concludes by adding inter-group edges to fulfill the remaining degree requirements.
LM, RR, ER, and DC-SBM stand for Laplace Mechanism, Randomized Response, Erdős–Rényi Model, and Degree-Corrected Stochastic Block Model, respectively.
}

\label{fig:framework}
\end{figure*}

\section{Our Proposal: \method}
\label{sec:methodology}

\subsection{Motivation}
Graph synthesis under LDP requires a balance between accuracy and privacy.
While generating a new graph using only basic statistics like degrees
requires less noise, this results in major information loss.
On the other hand, collecting more complex information, such as community partitions, can certainly preserve the topological characteristics of the original graph in the centralized DP setting. 
Nevertheless, this is impractical under LDP since each node is only aware of their own local connections to their neighbors.
Protecting such high-dimensional structure information demands an excessive amount of noise, which would destroy the utility of the data. 

Consequently, this trade-off motivates the need for an intermediate level of granularity that can effectively represent local structures without incurring overwhelming noise.
As established, prior approaches are fundamentally limited by their reliance on one-hop statistics, such as degrees, adjacency lists, or node-to-group connectivity.
However, realistic graph structures are not influenced solely by how many neighbors each node has, but also by how these neighbors are connected with each other.
Therefore, we take one step further and identify the local neighborhood structure as a suitable feature to target. 
Here, the local neighborhood structure of a node mainly consists of two key components: its degree and the number of triangles in which it participates. 
This allows us to break down the complex task of capturing local structure into more manageable subproblems.
By preserving these essential statistics and applying a specific reconstruction algorithm, \method can faithfully synthesize a new graph with outstanding utility while rigorously satisfying LDP.

\subsection{Overview}
\label{section:overview}

\autoref{fig:framework} illustrates the workflow of \method, which consists of three phases: initial statistics collection, refined statistics estimation, and global graph reconstruction. 
The total privacy budget $\varepsilon$ is divided into three parts: $\varepsilon_1$ for node degree collection, $\varepsilon_2$ for adjacency list perturbation, and $\varepsilon_3$ for refined statistics estimation.

\mypara{Phase 1: Initial Statistics Collection} 
The first challenge is that each user only has a local view of their immediate neighbors, making it difficult to collect neighborhood information under LDP.
To address this, each user perturbs their degree and adjacency list using LDP mechanisms and reports the results to the curator. 
The curator aggregates these into a complete noisy adjacency matrix and computes projection thresholds.
This step provides a global view and bounds the sensitivity of the estimation in next phase, ensuring that later refined statistics can be privately and accurately obtained. 

\mypara{Phase 2: Refined Statistics Estimation}
This phase aims to derive refined statistics as feasible targets for graph reconstruction.
Guided by the threshold from Phase 1, each user 
combines their projected ego-graphs with the noisy global view to estimate node-level triangle counts, perturb these counts for privacy, and report them to the curator.
Since degrees and triangle counts are perturbed separately, they may no longer characterize a valid local structure.
The curator then jointly corrects the noisy degrees and triangle counts to produce consistent and feasible statistics that capture neighborhood structures under LDP.

\mypara{Phase 3: Global Graph Reconstruction}
This phase constructs the synthetic graph from the refined statistics.
The main challenge is that degree targets and triangle targets impose different constraints on the graph.
Satisfying degrees first may destroy local clustering when triangles are later adjusted.
Therefore,
\method first partitions all nodes into groups by triangle counts.
We proceed to build intra-group edges, with connection probabilities calibrated to satisfy the target triangle constraints. 
Inter-group edges are then added to fulfill the remaining degree requirements.

\subsection{Initial Statistics Collection} 
\label{section:basic_statistics_collection}
This phase aims to collect basic statistics for later neighborhood estimation.
\method collects adjacency lists to build a noisy global view, which helps each user infer connections among its neighbors in the next phase.
Meanwhile, \method uses collected degrees to compute projection thresholds for bounding the sensitivity of triangle estimation.

\mypara{Statistics Collection}
\autoref{algorithm:basic_statistics_collection} (\autoref{line:basis_collection_begin} - \autoref{line:basis_collection_end}) shows the procedure for collecting degrees and adjacency lists.
For degree collection, each user $u_i$
perturbs their actual degree $d_i$ by adding Laplace noise with privacy budget $\varepsilon_1$.
This step can obtain a noisy degree $\tilde{d}_i$, which is reported to the curator.
For the adjacency list, we employ the randomized response mechanism to ensure LDP.
Specifically, each user $u_i$ perturbs the $j$-th bit of their adjacency list representing the edge $(i, j)$ based on probability $p=\frac{1}{1+e^{\varepsilon_{2}}}$, which is introduced in~\autoref{subsec:local differential privacy}.
The curator then aggregates all of these perturbed adjacency lists $\tilde{l}$ to construct a global noisy adjacency matrix $\tilde{M}$.

\mypara{Threshold Optimization} 
After collecting all the noisy degrees $\tilde{d}$, we try to determine a projection threshold $\theta$ for each user $u_i$.
We need to clarify the necessity of this step before we actually use the threshold in the next stage.
In the second phase, users are required to privately estimate the number of triangles in their local ego-graph.
The sensitivity of the estimation is bounded by the user's degree $d_i$,
To satisfy LDP, each user must add noise calibrated to a valid upper bound of sensitivity. 

A basic approach is to use the theoretical maximum value of degree $N-1$ as a universal bound for all users. 
This introduces substantial noise, resulting in utility degradation, especially for low-degree users.
One might consider using the noisy degree $\tilde{d}$ to reduce the bound.
Since Laplace noise could be negative, resulting in $\tilde{d}_i < d_i$, directly substituting $\tilde{d}_i$ as the upper bound leads to a violation of the LDP guarantee.

Consequently, we need a suitable strategy that projects the raw ego-graph of each user into a pruned view, ensuring the current degree is not more than a threshold $\theta$.
Then the sensitivity of the triangle count computation is upper-bounded by the threshold $\theta$.
The choice of the projection threshold $\theta$ reflects a classic trade-off between the projection error and noise error.
On the one hand, a low threshold causes more aggressive pruning of the graph.
This can easily lead to serious information loss, causing an underestimation of the true triangle count.
On the other hand, to ensure the LDP guarantee, the amount of noise added is proportional to the upper bound of sensitivity (\ie, $\theta$).
A large $\theta$ could preserve more information, but would result in excessive noise, increasing the variance of the final estimate.

Therefore, we aim to find a suitable $\theta_i$ for each user $u_i$ that minimizes the combined impact caused by projection error and noise error.
We parameterize the user-independent threshold as $\theta_i=\tilde{d}_i+\tau_i$, where $\tau_i$ is a tunable correction factor. 
The problem thus becomes finding the $\tau_i$ to minimize the function: 
\begin{equation}
f(\tau_i) = \frac{(\tilde{d}_i - \frac{1}{2})^2}{\varepsilon_1^2} e^{-\tau_i\varepsilon_1} + \frac{2}{\varepsilon_2^2} \left( \tilde{d}_i^2 + \frac{2}{\varepsilon_1^2} + 2\tilde{d}_i\tau_i + \tau_i^2 \right).
\label{eq:threshold_optimization}
\end{equation}

Finally, we obtain the suitable thresholds for each user by function \textbf{ThresholdOptimizer}, as shown in~\autoref{algorithm:basic_statistics_collection} (\autoref{line:threshold_optimization_begin} - \autoref{line:threshold_optimization_end}).
The detailed derivation will be explained in Appendix~\ref{section:theoretical_analysis}.

\begin{algorithm}[!t]
    \caption{Initial Statistics Collection}
    \label{algorithm:basic_statistics_collection}
    \KwIn{Users' adjacency lists $\{l_1,l_2,\cdots,l_n\}$, privacy budget $\varepsilon_1, \varepsilon_2$ }
    \KwOut{Noisy degrees $\tilde{d}$, perturbed adjacency matrix $\tilde{M}$, projection threshold $\theta$}
    \textbf{Step 1: Statistics Collection} \\
    \label{line:basis_collection_begin}
    \For{$i=1$ to $N$}{
        $\tilde{d}_i\leftarrow \sum_{j=1}^{n}l_{i,j} + Lap(\frac{1}{\varepsilon_1})$\\
        $\tilde{l}_i\leftarrow (RR_{\varepsilon_2}(l_{i,1}),RR_{\varepsilon_2}(l_{i,2}),\cdots,RR_{\varepsilon_2}(l_{i,i-1}))$\\
    }
    $\tilde{M}\leftarrow (\tilde{l}_1,\tilde{l}_2,\cdots, \tilde{l}_n)$\\
    \label{line:basis_collection_end}
    
    \textbf{Step 2: Threshold Optimization} \\
    \label{line:threshold_optimization_begin}
    \For{$i=1$ to $N$}{
        // Find a suitable $\tau_i$ based on \autoref{eq:threshold_optimization} \\
        $\tau_i \leftarrow $ 
        \textbf{ThresholdOptimizer}$(\varepsilon_1,\varepsilon_2,\tilde{d}_i)$\\
        $\theta_i \leftarrow \tilde{d}_i+\tau_i$ 
    }
    \label{line:threshold_optimization_end}
\end{algorithm}

\subsection{Refined Statistics Estimation} 
\label{section:refined_statistics_estimation}

After the initial statistics collection, the curator obtains noisy degrees and a noisy global adjacency view. 
In this phase, we aim to derive a set of refined statistics that can serve as feasible targets for construction.
Note that both
degree and triangle count are essential for reconstructing neighborhood structures. 
Specifically, degree captures the neighborhood size, while triangle count captures the internal connectivity among neighbors.
However, prior LDP triangle counting works \cite{imola2021locally,imola2022communication} mainly focus on the total number of triangles.
They do not guarantee that the resulting node-level statistics are structurally consistent with the degrees,
which is detrimental to graph reconstruction.
Therefore,
we refine the projection in triangle count collection and further propose the  consistency restoration to ensure the resulting statistics (\ie, triangle count and degree) can support a valid graph structure.

\mypara{Triangle Count Collection}
Each user $u_i$ first projects their local ego-graph $l_i$ according to the threshold $\theta_i$ received from the curator (denoted by function \textbf{GraphProjection} in~\autoref{line:graph_projection} of~\autoref{algorithm:triangle_counting_estimation}).
Specifically, if a user's degree $d_i$ exceeds $\theta_i$, they randomly sample and retain only $\theta_i$ neighbors;
otherwise, all neighbors are kept.
Recalling the previous phase in~\autoref{section:basic_statistics_collection}, we emphasized the necessity of the projection since this step can effectively reduce the upper bound of the sensitivity in the triangle computation. 

Subsequently, each user constructs a hybrid view by integrating two distinct sources of information:
their own actual, but pruned edges connecting to the neighbors and the noisy connections between these neighbors, which are derived from the noisy adjacency matrix $\tilde{M}$.
Within this hybrid view, all users can calculate a raw triangle count $t_{i}^{\prime}$ and the number of $2$-stars $s_i$ centered at the user $u_i$.
However, this count is biased due to the noise introduced by the Randomized Response on the edges between neighbors.
Specifically, the count includes two parts: true triangles when the connecting edges between neighbors are correctly preserved (with probability $1-p$) and false triangles formed when a pair of unconnected neighbors is reported as being connected (with probability $p$).
The expected value of the observed count is:
\begin{equation}
E[t_i^{\prime}]=(1-p)t_i+p\cdot (s_i-t_i). 
\end{equation}

Therefore, we obtain a calibrated estimate of %
triangle count:
\begin{equation}
t_i=\cfrac{t_{i}^{\prime}-p\cdot s_i}{1-2p}.
\label{eq:triangle_calibration}
\end{equation}

Since the calculation is fundamentally based on the user's authentic local connections to their neighbors, the user must perturb the raw estimate before its calibration by adding Laplace noise with a scale of $\frac{\theta_i}{\varepsilon_3}$.
The calibrated result of the private triangle count $\tilde{t}_i$ is transmitted to the curator.

\mypara{Consistency Restoration}
Upon collecting the noisy degree $\tilde{d}$ and the noisy triangle count $\tilde{t}$ from all users, a critical data consistency issue must be addressed.
For any undirected graph, the triangle count of any node is restricted by its degree through the structural relation $t\leq \frac{d(d-1)}{2}$.
However, the collected pair $(\tilde{d}_i,\tilde{t}_i)$ of user $u_i$ may violate this constraint since noise is added independently to the degree and triangle count estimates. 

To rectify this inconsistency, we perform a consistency restoration for each user, denoted by the function \textbf{Consistenter} as shown in~\autoref{algorithm:triangle_counting_estimation} (\autoref{line:consistentor_begin} - \autoref{line:consistentor_end}).
The procedure is formulated as a constrained optimization problem, seeking to find a new pair of $(\bar{d}_i, \bar{t}_i)$ that minimizes the weighted $\ell_2$ error from the original noisy collection $(\tilde{d}_i, \tilde{t}_i)$
over the feasible domain $0\leq d < N$ and $0\leq t\leq d(d-1)/2$.
The objective function is 
\begin{equation}
(\bar{d}_i,\bar{t}_i)=\mathop{\arg\min}\limits_{d,t} \left(w_d (d-\tilde{d}_i)^2+w_t(t- \tilde{t}_i)^2 \right),
\label{eq:consistency_restoration}
\end{equation}
where the weights $w_d$ and $w_t$ are set to be inversely proportional to the scale of their respective noise distributions, \ie, $\frac{w_d}{w_t}=\frac{\varepsilon_1\theta_i}{\varepsilon_3}$.
By solving this optimization problem, we obtain a set of statistics that not only respects the structural constraint of a graph but also effectively preserves the collected information, thus providing a solid foundation for the subsequent global graph reconstruction.

\begin{algorithm}[!t]
    \caption{Refined Statistics Estimation}
    \label{algorithm:triangle_counting_estimation}
    \KwIn{Users' adjacency lists $\{l_1,l_2,\cdots,l_n\}$, thresholds $\mathbf{\theta}=(\theta_1,\theta_2,\cdots,\theta_n)$, perturbed adjacency matrix $\tilde{M}$, privacy budget $ \varepsilon_3 $, noisy degree $\tilde{d}$}
    \KwOut{Calibrated triangle counts $\bar t$, calibrated degrees $\bar d$}
    \textbf{Step 1: Triangle Count Collection} \\
    \For{$i=1$ to $N$}{
        $\bar{l}_{i} \leftarrow \mathbf{GraphProjection}(l_i,\theta_i)$ \\
        \label{line:graph_projection}
        $t_i\leftarrow \sum_{j=1}^{n} \sum_{k=1}^{j-1}\bar{l}_{i,j} \cdot \bar{l}_{i,k} \cdot \tilde{M}_{j,k}$\\
        $s_i\leftarrow \sum_{j=1}^{n} \sum_{k=1}^{j-1} \bar{l}_{i,j} \cdot \bar{l}_{i,k}$ \\
        $t_i^{\prime} \leftarrow t_i+Lap(\frac{\theta_i}{\varepsilon_3})$, 
        $s_i^{\prime} \leftarrow s_i + Lap(\frac{\theta_i}{\varepsilon_3})$ \\
        
        Calculate $\tilde{t}_i$ based on $t_i^{\prime}$, $s_i^{\prime}$ and \autoref{eq:triangle_calibration} \\
    }
    \textbf{Step 2:  Consistency Restoration} \\
    \label{line:consistentor_begin}
    \For{$i=1$ to $N$}{
        // Find a valid pair according to~\autoref{eq:consistency_restoration} \\
        $(\bar{d_i},\bar{t_i})\leftarrow \mathbf{Consistenter}(\tilde{d}_i, \tilde{t}_i)$ \\
    }
    \label{line:consistentor_end}
          
\end{algorithm}

\subsection{Global Graph Reconstruction} 
\label{section:global_graph_reconstruction}
The final phase focuses on graph synthesis using the neighborhood structure information (degrees $\bar{d}$ and triangle counts $\bar{t}$) obtained in previous phases.
Directly satisfying each node's degree and triangle count simultaneously is intractable.
To address this, we adopt a \textit{triangle-first reconstruction} strategy.
The key insight behind this is that satisfying triangle counts is a more complex and constrained task than satisfying degrees.
If we satisfy degrees first, modifying the graph later to match the target triangles is difficult, as any edge addition can have cascading effects on the degrees and the local clustering structures.
Conversely, by constructing triangles initially, we establish the high-order structural backbone and leave the remaining task of generating necessary edges to fulfill the degree target as a more manageable problem.
The idea is conceptually related to block-based generative models~\cite{karrer2011stochastic, seshadhri2012community, edunov2018generating} where nodes are partitioned into blocks and edges are then generated to match specific structural properties.

The general reconstruction is organized into three steps: node grouping determination, intra-group construction, and inter-group construction.

\begin{algorithm}[!t]
    \caption{Node Grouping Determination}
    \label{algorithm:node_grouping_determination}
    \KwIn{Target degrees $\bar{d}=(\bar{d}_1,\bar{d}_2,\cdots,\bar{d}_n)$, target triangle counts $\bar{t}=(\bar{t}_1,\bar{t}_2,\cdots,\bar{t}_n)$}
    \KwOut{Groups of nodes $S$}
        $S,S^{\prime}\leftarrow\{\}$\\
    \textbf{Step 1: Initial Assignment} \\
    \label{line:assignment_begin}
    \For{$i=1$ to $N$}{
        // Find minimum $d^{\prime}$ satisfying the triangle count \\
        $d' \leftarrow \min \{d \mid \frac{d(d-1)}{2} \ge \bar{t}_i\}$
        
        // Find a non-full group or add a new group to $S$\\
        \If { $\mathrm{ExistGroup(}S^{\prime}, d^{\prime} \mathrm{)}$ is \textbf{false}}{
            $g^{\prime} \leftarrow$ EmptyGroup()\\
            $S^{\prime}$.Add(g)\\
        }
        $g \leftarrow$ SelectGroup($S^{\prime}, d^{\prime}$)\\
        $g$.Append($u_i$)\\
        
        \If {$g.\mathrm{size} > \min_{j\in g}(\bar{d}_j)$}{
            $S^{\prime}$.Remove($g$)\\
            $S$.Add($g$)\\
        }
    }
    \label{line:assignment_end}
    \textbf{Step 2: Incomplete Groups Merging} \\
    Sort the incomplete group $S^{\prime}$ in ascending order of $d'$ \\
    \label{line:merge_begin}
    $g^{\prime}\leftarrow$ EmptyGroup()\\
    \For{$g \in S^{\prime}$}{
        $g^{\prime}$.Merge($g$)\\
        $S^{\prime}$.Remove($g$)\\
        \If {$g^{\prime}.\mathrm{size}$ $>$ $\min_{j\in g^{\prime}}(\bar{d}_j)$}{
            $S$.add($g^{\prime}$)\\
            $g^{\prime} \leftarrow$ EmptyGroup()\\
        }
    }
    \label{line:merge_end}
\end{algorithm}

\mypara{Node Grouping Determination}
To implement the triangle-first strategy, we begin by partitioning nodes to simplify the construction process. 
We group the nodes based on their target triangle count $\bar{t}$ since
this creates sub-problems where all nodes within a group have a similar triangle structure objective.
To achieve a target triangle count of $\bar{t}_i$, a node must have a minimum degree $d^{\prime}$ that satisfies the inequality $\frac{d^{\prime}(d^{\prime}-1)}{2}\geq t$.
Specifically, for a group to be feasible to form the desired triangle structures, each node needs at least $d^{\prime}$ neighbors from the same group.
This establishes a lower bound on the size $k$ of the group: $k\ge d'+1$.

The primary objective of the following intra-group construction phase is to exclusively satisfy the target triangle counts.
Therefore, we need to prevent the degree of any node from violating its target $\bar{d}_i$ during this step.
For instance, if a group has a size of $k$, the intra-group construction will possibly connect each node to the other $k-1$ nodes. 
To ensure no node's degree exceeds its target, the number of newly added intra-group edges $k-1$ must not be greater than the smallest degree budget in the group, which safely tolerates within-group variance by imposing an upper bound on the group size: $k\le \min(\bar{d}_i)+1$.

With these bounds established, the grouping process proceeds in two phases, detailed in~\autoref{algorithm:node_grouping_determination}.
First, we perform an initial assignment by iterating through each node sequentially (\autoref{line:assignment_begin} - \autoref{line:assignment_end}).
For each node $u_i$, we calculate its required intra-group degree $d_i^{\prime}$ as the identical value and place it into the corresponding active group.
Whenever a group's size grows to satisfy its upper bound constraint, it is considered a valid complete group and moved to the final set $S$. 

After all nodes are processed, any remaining active groups that have not met their minimum size requirement are considered illegal and must be merged (\autoref{line:merge_begin} - \autoref{line:merge_end}).
To preserve the homogeneity of triangle targets, all illegal groups are first sorted based on their target triangle counts.
Then we merge the adjacent groups until the new group satisfies the minimum size requirement.

\begin{algorithm}[!t]
        \caption{Intra-group Edge Construction}
        \label{algorithm:intragroup_edge_construction}
        \KwIn{Target triangle counts $\bar{t}=(\bar{t}_1,\bar{t}_2,\cdots,\bar{t}_n)$, the set of groups $S$.}
        \KwOut{Partial synthetic graph $G^{\prime}$}
            \For{$g \in S$}{
                $n\leftarrow g$.size\\
                Calculate $P_e$ based on $n$, $\bar{t}$ and \autoref{eq:intra_edge_p} \\
                \For{$u_i,u_j \in g, i<j$}{
                    \If {$\mathrm{Random}(0,1)<P_e$}{
                        $G^{\prime}$.AddEdge($u_i$,$u_j$)\\
                    }
                }
            }
\end{algorithm}

\begin{algorithm}[!t]
        \caption{Inter-group Edge Construction}
        \label{algorithm:intergroup_edge_construction}
        \KwIn{Target degrees $\bar{d}=(\bar{d}_1,\bar{d}_2,\cdots,\bar{d}_n)$, target triangle counts $\bar{t}=(\bar{t}_1,\bar{t}_2,\cdots,\bar{t}_n)$, the set of groups $S$, partial synthetic graph $G^{\prime}$}
        \KwOut{Final synthetic graph $G$}
            $d_{curr}\leftarrow G^{\prime}$.degrees()\\
            $d_{rem}\leftarrow \bar{d}-d_{curr}$\\
            $G\leftarrow G^{\prime}$\\
            \For{$g \in S$}{
                $F[g]\leftarrow \mathrm{\textbf{FenwickTree}}(g, d_{rem})$\\
                \label{line:fenwicktree}
            }
            
            \For{$g_i,g_j \in S, i<j$}{
                \label{line:intergroup_construction_begin}
                $d_i \leftarrow \sum_{u_k\in g_i}d_{rem}[u_k]$ \\
                $d_j \leftarrow \sum_{u_k\in g_j}d_{rem}[u_k]$ \\
                $E_c \leftarrow \frac{d_i*d_j}{\sum d_{rem}}$\\
                \label{line:candidate_num}
                \For{$k=1$ to $E_{c}$}{                    
                    $u_i \leftarrow F[g_i]$.Select(Random($0$, $d_i$)) \\
                    $u_j \leftarrow F[g_j]$.Select(Random($0$, $d_j$)) \\
                    $G$.AddEdge($u_i$,$u_j$)\\
                    $d_{rem}[i]\leftarrow d_{rem}[i]-1$, $d_{rem}[j]\leftarrow d_{rem}[j]-1$\\
                    $F[g_i]$.Update($u_i$), $F[g_j]$.Update($u_j$)\\
                }
            }
            \label{line:intergroup_construction_end}
\end{algorithm}

\mypara{Intra-group Edge Construction}
With the nodes partitioned into various groups, the second step is to build a dense local structure within each group to primarily satisfy the triangle count targets.
All nodes within a single group share a similar target for their local triangular structure.
This allows us to make a simplified assumption: every node within a group can be treated as equivalent for the purpose of triangle formation.
An Erdős-Rényi (ER) random graph~\cite{erdos1960evolution} $G(n, P_e)$ is the classic model that embodies this principle of statistical uniformity, where every pair of nodes has an independent probability $P_e$ of being connected.
Therefore, we naturally choose the ER model for the step of intra-group edge construction.
To apply this model, we must determine the two required parameters: the number of nodes $n$ and the edge probability $P_e$.
The parameter $n$ is simply the group size.
The $P_e$ is determined by the relationship between the edge probability and the expected number of triangles, as shown in the following formula:
\begin{equation}
\frac{\sum t_i}{3} = P_e^3\frac{n(n-1)(n-2)}{6}.
\label{eq:intra_edge_p}
\end{equation}

As shown in \autoref{algorithm:intragroup_edge_construction}, for each group, we iterate through every pair of nodes within the group and build an edge between them with the calculated probability $P_e$.
This process effectively builds subgraphs where the number of triangles involving each node closely approximates its target $\bar{t}_i$. 

\mypara{Inter-group Edge Construction}
The final step is to build inter-group edges to fulfill the remaining degree requirements for every node, as shown in \autoref{algorithm:intergroup_edge_construction}.
To preserve the triangle counts established in the previous step, edges are generated exclusively between nodes from different groups (\autoref{line:intergroup_construction_begin} - \autoref{line:intergroup_construction_end}), largely avoiding new triangle formation.
In typical real-world networks, most nodes have low degrees, while only a small number of hub nodes possess very high degrees.
A simple random connection mechanism would quickly satisfy the degree requirements of numerous low-degree nodes, leaving the high-degree nodes with insufficient candidates to meet their large targets. 
To ensure that nodes with higher degree deficits are prioritized, we employ a weighted random sampling approach, where a node's probability of being selected is proportional to its current degree deficit.
We implement the weighted sampling using a Fenwick tree~\cite{fenwick1994new}, which maintains the cumulative degree deficits of nodes in each group (corresponding to \textbf{FenwickTree} in \autoref{line:fenwicktree}).
It supports both node selection and dynamic updates in logarithmic time.

In practice, the algorithm iterates through each pair of groups.
For each pair, we first estimate the number of edges $E_c$ to form between them, which is calculated to be proportional to the groups' total remaining degrees (\autoref{line:candidate_num}).
Then we repeatedly sample a node from each group using their respective Fenwick trees, add an edge, and update the deficits and tree structures accordingly until all degree targets $\bar{d}$ are approximately met.
This yields the final synthetic graph.

\subsection{Algorithm Analysis}
\mypara{Privacy Budget Analysis}
Recalling that \method consists of three phases, \ie, initial statistics collection, refined statistics estimation, and global graph reconstruction. 
The first two phases both involve touching users' real data, where each user reports degree, adjacency list, and triangle counts with $\varepsilon_1$, $\varepsilon_2$, and $\varepsilon_3$, respectively.
By the sequential composition property of LDP, these three releases together satisfy $(\varepsilon_1+\varepsilon_2+\varepsilon_3)$-edge LDP.
The final global graph reconstruction is performed by the curator using only the perturbed data, thus not consuming any privacy budget in accordance with the property of post-processing.
Thus, we can obtain~\autoref{throrem:basic_statistics_collection}.
The detailed proof is provided in Appendix~\ref{section:privacy_budget_analysis}.
\begin{theorem}
\label{throrem:basic_statistics_collection}
\method satisfies $(\varepsilon_1+\varepsilon_2+\varepsilon_3)$-edge LDP.
\end{theorem}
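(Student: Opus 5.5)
The plan is to split the protocol into the three user-side releases, show that each one satisfies edge LDP with its own budget, and then combine them by sequential composition and post-processing as stated in Section~\ref{subsec:local differential privacy}.

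Fix a user $u_i$ and two adjacency lists $l_i, l_i'$ differing in one bit $b_j$. The three releases are handled as follows.

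\textbf{Degree release.} The degree $d_i=\sum_j l_{i,j}$ changes by at most $1$, so adding $Lap(1/\varepsilon_1)$ is $\varepsilon_1$-edge LDP by the Laplace mechanism.

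\textbf{Adjacency-list release.} Each bit is perturbed independently by randomized response with $p=\frac{1}{1+e^{\varepsilon_2}}$. The other bits have identical output distributions under $l_i$ and $l_i'$. The output likelihood ratio therefore reduces to that of the single differing bit, which is at most $(1-p)/p=e^{\varepsilon_2}$. This gives $\varepsilon_2$-edge LDP.

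\textbf{Triangle release.} This is where care is needed. The threshold $\theta_i$, the correction $\tau_i$ and the matrix $\tilde M$ are all functions of already-privatized outputs from Phase~1, so I treat them as fixed auxiliary inputs. Their dependence on $l_i$ has already been paid for by $\varepsilon_1$ and $\varepsilon_2$; this is adaptive sequential composition.

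Conditioned on these inputs, consider the projected list $\bar l_i$ with $|\bar l_i|\le\theta_i$. Changing one bit $b_j$ adds or removes at most one neighbor $u_j$ of the projected ego-graph. The quantity $t_i=\sum_{k}\bar l_{i,j}\bar l_{i,k}\tilde M_{j,k}$ then changes by at most the number of other retained neighbors, which is at most $\theta_i$. The 2-star count $s_i$ likewise changes by at most $\theta_i$. Both counts receive $Lap(\theta_i/\varepsilon_3)$.

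\textbf{Main obstacle.} Two difficulties arise in the triangle release.

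First, the projection is randomized (random sampling of $\theta_i$ neighbors). Adding an edge may change which neighbors are sampled, not just add one. I would handle this by coupling the sampling randomness: first fix a random permutation of all nodes, then keep the first $\theta_i$ neighbors in that order. Under this coupling, adding one neighbor displaces at most one retained neighbor. Bounding the sensitivity carefully under the coupling, and then averaging over the coupling (a convex combination of $e^{\varepsilon}$-bounded ratios stays bounded), gives the guarantee.

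Second, both $t_i'$ and $s_i'$ are released. If both must be counted, either the budget $\varepsilon_3$ is split between them or the joint $\ell_1$ sensitivity is taken as the sum. I would present the argument as releasing the calibrated quantity $t_i-p s_i$, or as allocating $\varepsilon_3$ across the two counts, so that the total for this step is $\varepsilon_3$. The calibration of Eq.~\eqref{eq:triangle_calibration} is then post-processing.

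\textbf{Combining.} Consistency restoration and all of Phase~3 use only privatized outputs, so they are post-processing. Sequential composition then yields $(\varepsilon_1+\varepsilon_2+\varepsilon_3)$-edge LDP for each user, which proves Theorem~\ref{throrem:basic_statistics_collection}.
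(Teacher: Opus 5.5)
Your proposal is correct and follows the paper's route: Laplace noise with sensitivity $1$ for the degree ($\varepsilon_1$), per-bit randomized response for the adjacency list ($\varepsilon_2$), and Laplace noise of scale $\theta_i/\varepsilon_3$ with projection-bounded sensitivity for the triangle statistic ($\varepsilon_3$), combined by sequential composition, with consistency restoration and reconstruction as post-processing.

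Your extra care fills in steps the paper leaves implicit: adaptive composition for $\theta_i$ and $\tilde M$, and the permutation coupling for the random projection. Under that coupling a swap changes $t_i$ by a difference of two quantities in $[0,\theta_i-1]$, so the sensitivity is still at most $\theta_i$. For the double noise, commit to charging only the transmitted $\tilde t_i$. Its data-dependent part $t_i-p s_i$ has sensitivity at most $\theta_i$, and the draw added to $s_i$ is data-independent given $\theta_i$, so it is post-processing. That option matches the algorithm as written, whereas splitting $\varepsilon_3$ would change the noise scales.
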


\mypara{Complexity Analysis}
We provide an analysis of the computational complexity and communication workload of \method.
Let $N$ be the total number of users and $d$ be the average degree of users.
The complexity of \method is $O(d^2N)$ on the user side and $O(dNlogN)$ on the curator side.
The communication workload is $O(dN)$.
Detailed analyses are provided in Appendices~\ref{section:complexity_analysis} and~\ref{section:communication_overhead}.

\section{Evaluation}
\label{sec:evaluation}

\subsection{Experimental Setup}
\label{subsec:experimental_setup}

\mypara{Datasets}
We run experiments on four real-world datasets. 
\autoref{table:dataset_statistics} shows the basic information of the datasets. 

\mypara{Metrics}
Following~\cite{qin2017ldpgen,nguyen2015differentially,yuan2023privgraph}, we evaluate the quality of the synthesized graphs from multiple perspectives to ensure a comprehensive assessment. 
Each metric is used to measure the similarity between the synthesized graph and the original graph in a specific structural
feature.
We grouped them into three categories.
\begin{itemize}[itemsep=2pt,topsep=2pt,parsep=0pt]
    \item \textbf{Node-level Information.} 
    This category evaluates the properties of individual nodes using two key metrics: degree distribution and eigenvector centrality (EVC).
    For the former, We compute the Kullback-Leibler (KL) Divergence~\cite{kullback1997information}, where a lower value signifies a more faithful reconstruction of the node degree patterns.
    For EVC, which quantifies node influence,
    we report the Mean Absolute Error (MAE) of the top 1\% most influential nodes' EVC scores.

    \item \textbf{Local Clustering Structure.}
    This category quantifies the tendency of nodes to form tight groups, using two related metrics: the total number of triangles and the local clustering coefficient.
    We utilize Relative Error (RE) to evaluate the accuracy of the total triangle count.
    To provide a more detailed measurement, we adopt the local clustering coefficient~\cite{wasserman1994social}, which reflects how densely connected a node's neighbors are to one another.
    In particular, we calculate the Mean Squared Error (MSE) of all nodes' clustering coefficients compared to the original graph.

    \item \textbf{Global Topology Structure.} 
    This category describes macroscopic properties of the graph through three metrics: density, transitivity, and modularity.
    We evaluate density to measure the graph's sparsity.
    To assess the graph's overall tendency to form clusters, we measure transitivity, which calculates the ratio of closed triplets to all connected triplets~\cite{schank2005approximating}.
    Besides, we utilize modularity to quantify the strength of the graph's community structure, with the Louvain method~\cite{blondel2008fast} used for community detection.
    For all three of these global metrics, we assess the performance by computing their RE against the values from the original graph.
\end{itemize}    
For a more comprehensive evaluation of the graph topology preservation, we conduct additional experiments on features such as motifs and diameter, with results presented in Appendix~\ref{section:expanded_metrics}.

\begin{table}[!t]
    \caption{Dataset statistics.}
    \centering
    \small
    \setlength{\tabcolsep}{0.45em}
    \begin{tabular}{c | c | c | c | c}
    \toprule
     \textbf{Dataset} & \textbf{Nodes} & \textbf{Edges} & \textbf{Density} & \textbf{Type} \\
     \midrule
       Chamelon~\cite{rozemberczki2021multi}  & 2,277 & 31,421 & 0.01213 
       & Web page
       \\
       Facebook~\cite{leskovec2012learning}  & 4,039 & 88,234 & 0.01082 
       & Social 
       \\
       CA-HepPh~\cite{leskovec2007graph} & 12,008 & 118,521 & 0.00164 & Collaboration  \\

       Twitter~\cite{leskovec2012learning} & 81,306 & 1,768,149 & 0.00041 & Social  \\ 
      \bottomrule
    \end{tabular}
    
    \label{table:dataset_statistics}
\end{table}

\mypara{Competitors} 
We compare \method with three typical baseline methods (\ie, \dbg, \albg, and \gen\cite{qin2017ldpgen}), which have been introduced in~\autoref{section:existing_solutions}. 

\mypara{Implementation}
For the allocation of the total privacy budget $\varepsilon$, we set $\varepsilon_1:\varepsilon_2 :\varepsilon_3=0.2\varepsilon:0.4\varepsilon:0.4\varepsilon$, 
where $\varepsilon$ ranges from 0.5 to 3.0. 
We implement \method with Python 3.11, and all experiments are conducted on a server with Intel Xeon E5-2620 v4@2.10GHz and 128GB memory.
We repeat the experiment 10 times for each setting, with the mean and standard deviation reported.

\subsection{End-to-End Evaluation}
\label{subsec:end_to_end}
In this section, we perform an end-to-end evaluation of \method and the competitors from three perspectives. 
\autoref{fig:end_to_end} illustrates the experimental results on various datasets.

\begin{figure*}[!t]
\centering
\includegraphics[width=0.95\textwidth]
{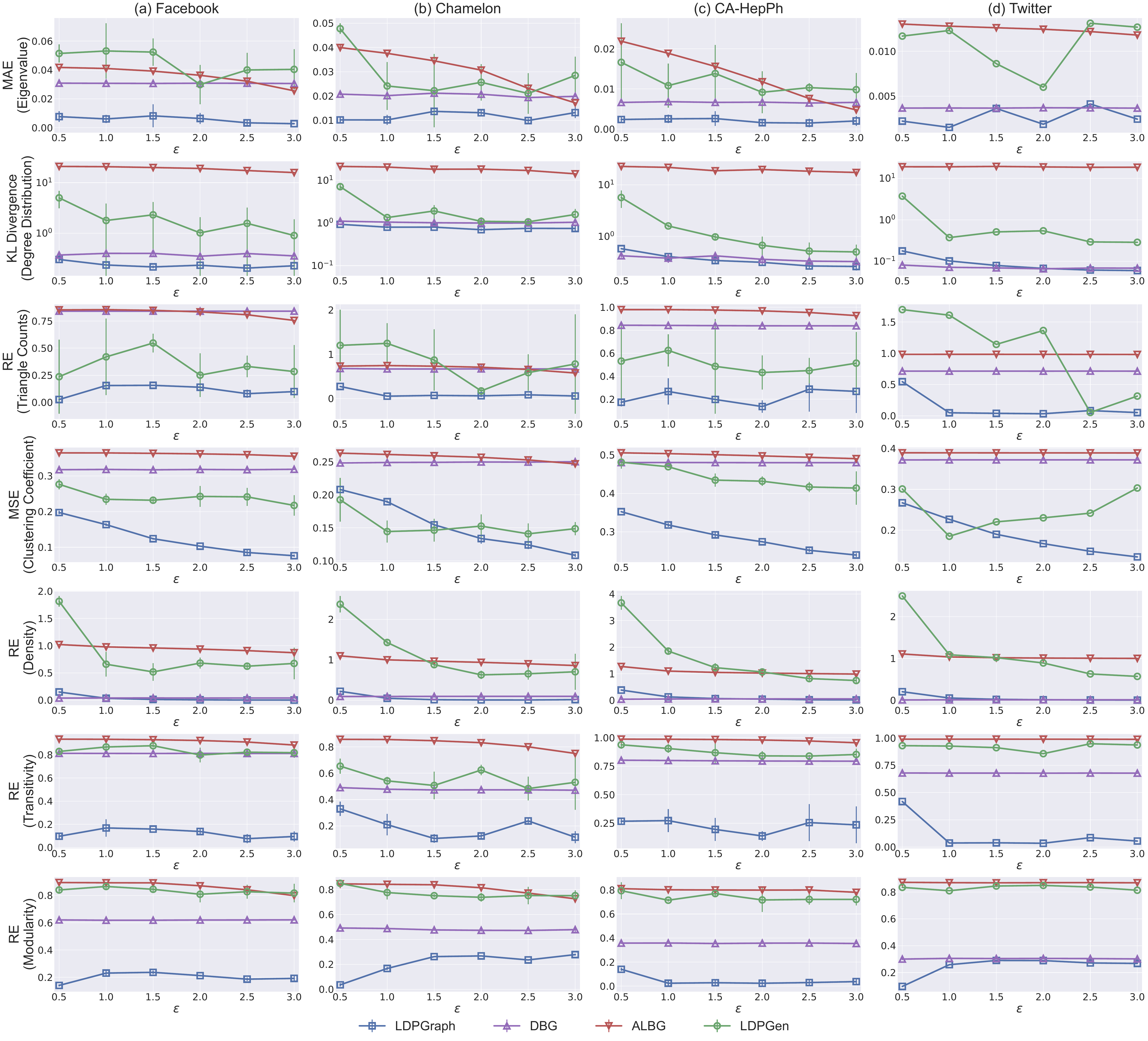}
\caption{
{End-to-end comparison. %
The columns represent the used datasets, and the rows stand for different metrics.
In each plot, the x-axis denotes the privacy budget $\varepsilon$, and the y-axis denotes the performance. For all rows, lower is better.
}}
\label{fig:end_to_end}
\end{figure*}

{
\mypara{Results}
For node-level information, \method demonstrates the highest accuracy across most settings.
As shown in~\autoref{fig:end_to_end}, \method achieves the lowest error for both the MAE of Eigenvector Centrality and the KL Divergence of the Degree Distribution on the Facebook, Chamelon, and CA-HepPh datasets.
An exception is on the Twitter dataset, where \dbg performs slightly better when the privacy budget is small. 
This can be attributed to the fact that \dbg is specifically designed to focus its entire privacy budget and generation model on node degree.
However, \method remains highly competitive in this scenario by capturing richer structural statistics, and even surpasses \dbg on all other datasets.
In contrast, \albg consistently performs the worst, as its bit-flipping mechanism on sparse adjacency lists introduces excessive noise.

The superiority of \method is evident in its preservation of local clustering structures.
Across most settings, \method achieves lower RE of Triangle Count and MSE of Clustering Coefficient than all baseline methods.
This highlights the benefit of collecting triangle-count information.
While \gen and \dbg attempt to reconstruct local cohesion from simpler degree information, they fail to capture the dense neighborhood structures. 

In the domain of global topology structure, \method still demonstrates superior performance. 
For the Density metric, \method accurately matches the original graph's sparsity in all test scenarios.
More importantly, \method establishes an obvious lead on the Transitivity and Modularity. 
On the Facebook dataset with strong community structures, \method significantly reduces the error compared to all baselines.
This result indicates that the graphs synthesized by \method possess far more realistic topological structures than those produced by others.

\mypara{Takeaways} Based on the experimental results, we obtain the following conclusions. 
\begin{itemize}[itemsep=0pt,topsep=2pt,parsep=0pt]
    \item The success of \method highlights a key advantage of using a series of statistics.
    Rather than \albg that protects the entire adjacency matrix with a large amount of noise,
    we protect a compact set of highly informative statistics representing the neighborhood structure, which allows us to concentrate the privacy budget on preserving these properties with less noise.
    This explains the superior performance of \method in most settings.
    
    \item In the final reconstruction phase, %
    \method adopts weighted random sampling for edge generation, which assigns a higher priority to nodes with larger target degrees. 
    In this way, \method avoids scenarios where the high-degree nodes are left with unfulfilled degree requirements in the later stage of synthesis.
    This strategy is directly responsible for the great performance in our experiments, leading to a more accurate degree distribution and overall density.
    
    \item By capturing neighborhood structure information, especially triangle counts, \method preserves the local clustering structure which is an important property of real-world graphs.
    Thus, the superior performance on the Clustering Coefficient and Transitivity metrics is an expected result.
    In contrast, the baseline methods (\ie, \gen and \dbg) that primarily or solely rely on degree information fall short.
    
    \item \method clusters nodes with similar statistical profiles through the node grouping mechanism.
    The edge generation algorithm then naturally forms denser connections within these groups.
    This significantly helps the reconstruction algorithm in identifying community structures and recovering the node information.
    
\end{itemize}
}

\subsection{Ablation Studies}
\label{subsec:ablation_study}

\begin{figure*}[!t]
    \centering
    \includegraphics[width=0.95\textwidth] {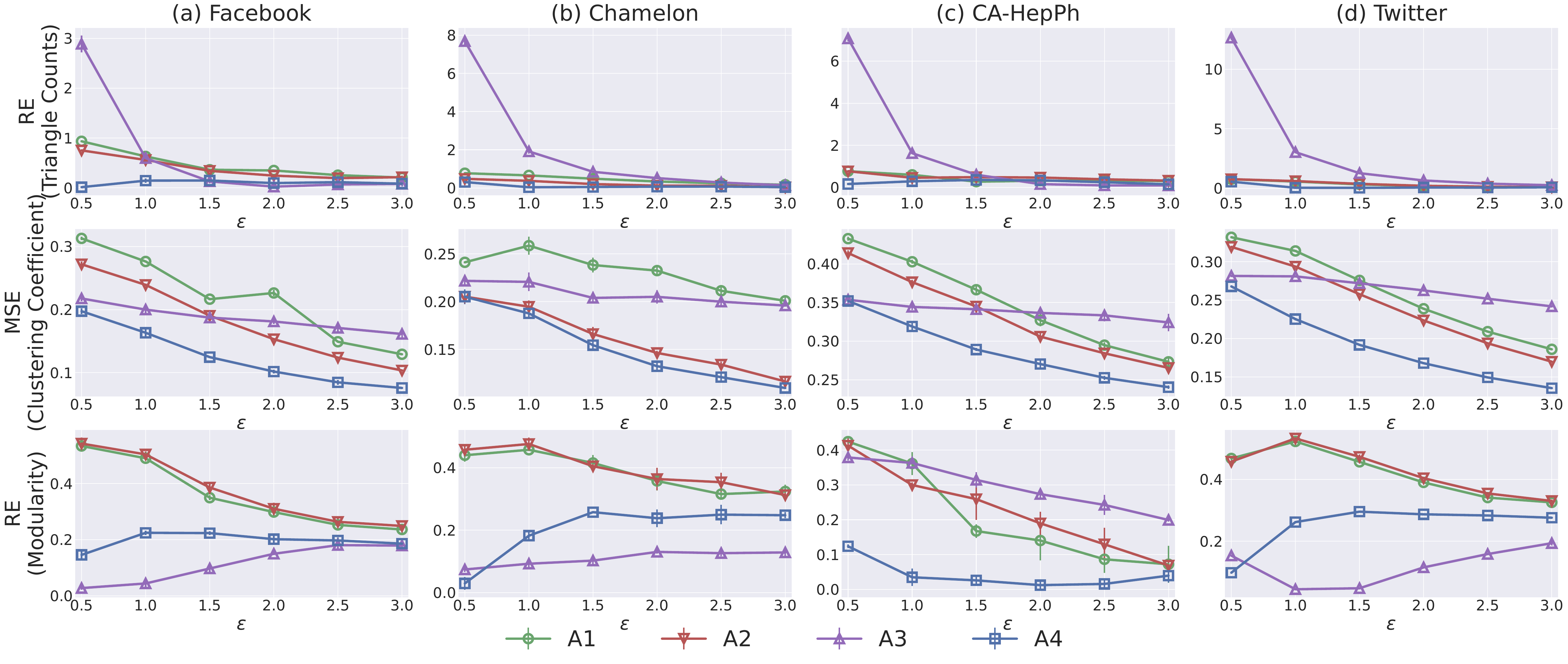}
    \caption{
    Effectiveness of various components.
    For all rows, lower is better.
    }
\label{fig:ablation_study}
\end{figure*}

There are two main components in the first and second phases: The threshold optimization step in Phase 1 and the consistency restoration step in Phase 2.
As detailed in~\autoref{table:ablation_studies}, we design four distinct configurations by selectively enabling or disabling these two modules.

\autoref{fig:ablation_study} illustrates the performance of these variants across three key metrics: Triangle Counts, Clustering Coefficient, and Modularity. 
From the results, our complete solution \textbf{A4}, which integrates both components, achieves the best performance in most cases.
This demonstrates that both steps are essential for accurate graph reconstruction.
Specifically, \textbf{A2} outperforms \textbf{A3} in Clustering Coefficient, suggesting that threshold optimization is more effective for capturing local properties.
Conversely, \textbf{A3} performs better in Modularity and even surpasses \textbf{A4} on several datasets.
This indicates that consistency restoration is crucial for preserving global structures, as it ensures topological feasibility under LDP noise. 
However, \textbf{A3} exhibits poor performance in Triangle Counts at low privacy budgets. 
Considering these diverse strengths and limitations, their integration in \textbf{A4} provides the most robust and balanced performance overall.

\begin{table}[!t]
    \caption{Details of ablation studies.}
    \centering
    \small
    \setlength{\tabcolsep}{0.45em}
    \begin{tabular}{c | c | c }
    \toprule
     \textbf{Name} & \textbf{Threshold Optimization} & \textbf{Structure Consistenter}  \\
     \midrule
       A1 & $\times$ & $\times$ \\
       A2 & \checkmark & $\times$ \\
       A3 & $\times$ & \checkmark   \\
       A4 & \checkmark & \checkmark  \\ 
      \bottomrule
    \end{tabular}
    
    \label{table:ablation_studies}
\end{table}

\subsection{Scalability}
\label{subsec:scalability}
To evaluate the efficiency of \method across different graph scales and densities, we conduct scalability experiments using two sampling strategies on the Facebook and CA-HepPh datasets. 
First, we vary the graph size $N$ by sampling 20\% to 100\% of the nodes while maintaining a constant average degree 
$d$. 
Second, we analyze the graph density $d$ by fixing the node set and scaling the number of edges from 20\% to 100\%. 
The runtimes are reported in \autoref{fig:scalability}.

As illustrated, \method exhibits a linear runtime growth with respect to 
$N$. 
While the runtime increases with d due to the local triangle estimation phase, \method remains computationally feasible for most real-world scenarios where graphs are inherently sparse ($d\ll N$). 
Although its execution time is comparable to ALBG and slower than simpler methods like DBG or LDPGen, \method strikes a superior balance between computational feasibility and structural fidelity.

\begin{figure}[h]
    \centering
    \includegraphics[width=0.45\textwidth] {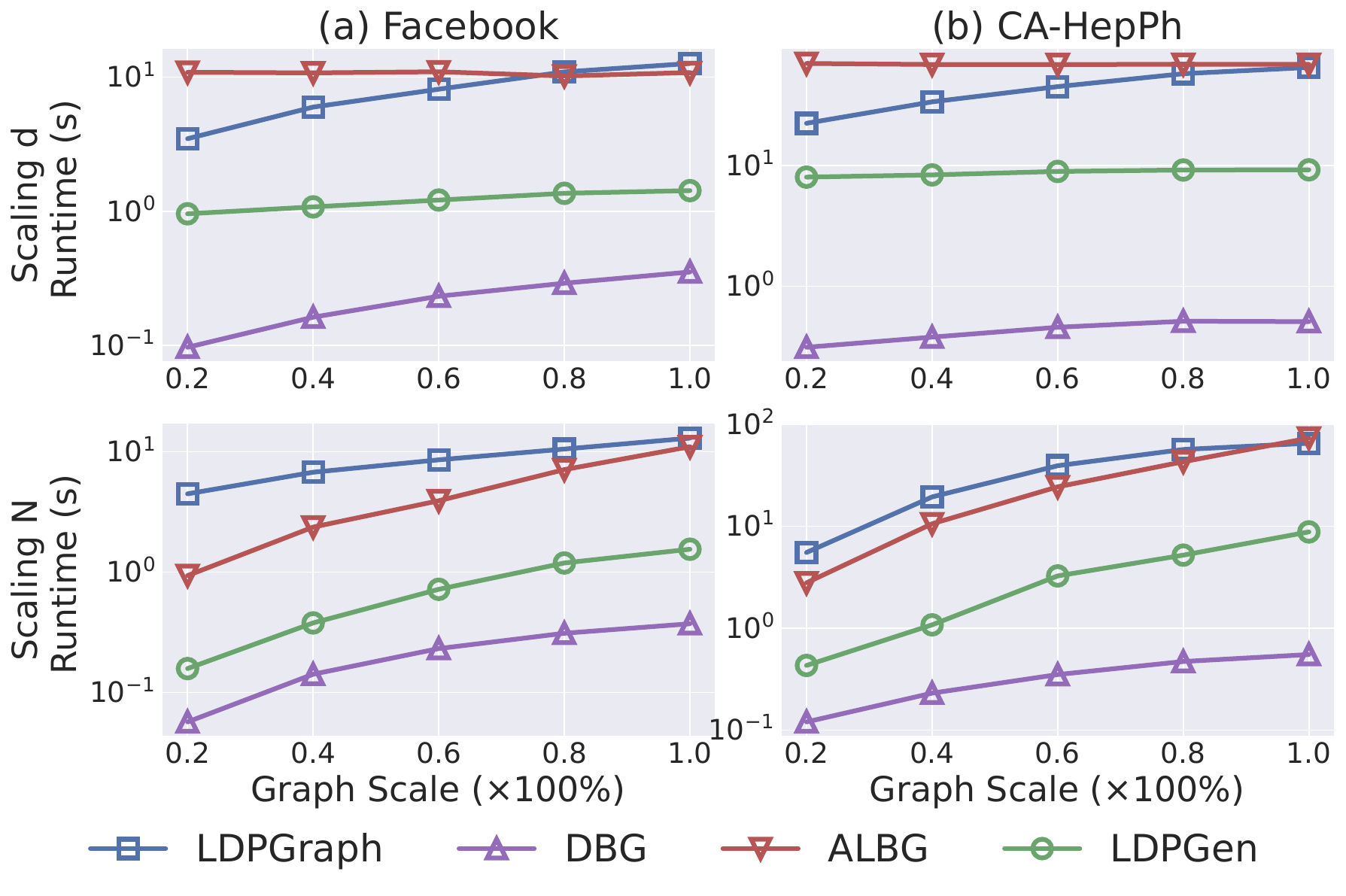}
    \caption{
    Scalability evaluation.
    }
\label{fig:scalability}
\end{figure}

\subsection{Impact of Privacy Budget Allocation}
\label{subsec:impact_of_privacy_budget_allocation}

As detailed in \autoref{sec:methodology}, the total privacy budget $\varepsilon$ is partitioned into three key components: $\varepsilon_1$ for the degree collection, $\varepsilon_2$ for the adjacency list collection, and $\varepsilon_3$ for the triangle count estimation. 
The allocation of this budget is critical as it directly influences the accuracy of the raw statistics used for graph reconstruction.
We fix the total budget $\varepsilon=2.0$ and vary the ratios of $\varepsilon_1/\varepsilon$ and $\varepsilon_2/\varepsilon$ from 0.1 to 0.8 in increments of 0.1, with $\varepsilon_3$ being the remainder.
\autoref{fig:impact_eps} shows the impact across our four datasets.

We observe that assigning a very small $\varepsilon_1$ severely degrades the performance of degree distribution.
We also observe that when the combined budget for degrees and adjacency lists ($\varepsilon_1+\varepsilon_2$) is large, the remaining budget for triangle estimation $\varepsilon_3$ becomes insufficient.
This leads to performance degradation in structural metrics that rely on neighborhood structure, like clustering coefficient and modularity.
These trends indicate that while all components are necessary, the more complex structural statistics are more sensitive to perturbation noise.

Based on these observations, we determined that an unbalanced allocation favoring the more sensitive structural components yields the best overall performance.
Thus, we set the allocation ratio as $\varepsilon_1:\varepsilon_2 :\varepsilon_3=0.2\varepsilon:0.4\varepsilon:0.4\varepsilon$.
This strategy allocates a sufficient budget to more fine-grained statistics of adjacency list and more complex task of triangle count estimation.
The complete experimental results for all evaluated metrics under different budget distributions are provided in Appendix~\ref{section:budget_allocation_results}.

\begin{figure*}[!t]
    \centering
    \includegraphics[width=0.96\textwidth] {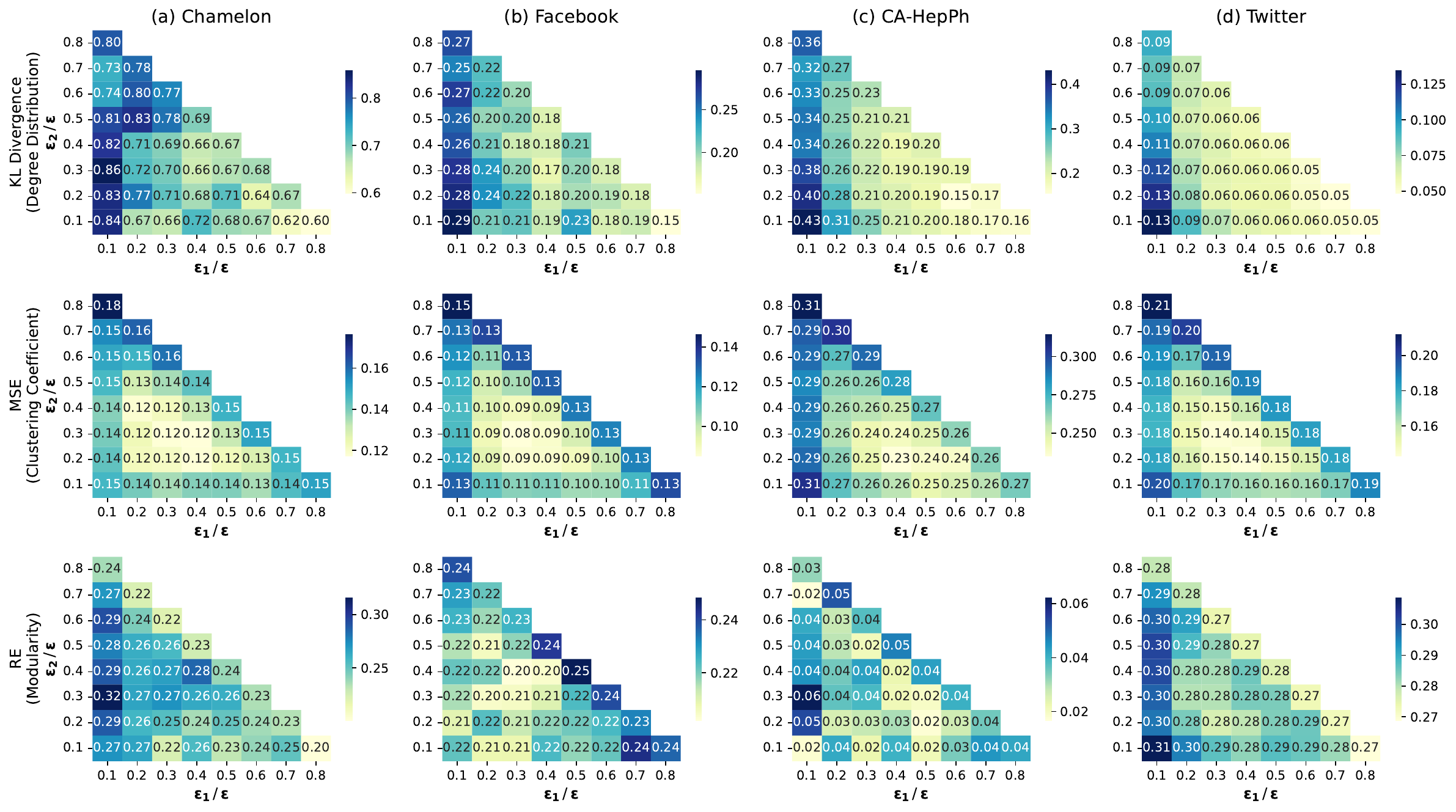}
    \caption{
    Impact of privacy budget allocation. 
    For all grids, lower is better.
    }
\label{fig:impact_eps}
\end{figure*}

\section{Related Work}
\label{sec:related_work}
\subsection{DP Graph Synthesis}
Research in differentially private graph synthesis has largely focused on the centralized privacy model~\cite{nguyen2015differentially, chen2014correlated, xiao2014differentially, clauset2008hierarchical, sala2011pygmalion, mir2012differentially, wang2013preserving, yuan2023privgraph,ji2021xor,yuan2024psgraph}.
One common strategy directly perturbs the adjacency matrix.
For example, the \textit{Top-m Filter} method by Nguyen \etal~\cite{nguyen2015differentially} directly perturbs the adjacency matrix by adding Laplace noise to each cell and constructs the graph from the $m$ potential edges with the highest scores.
Chen \etal~\cite{chen2014correlated} proposed the \textit{density-based exploration and reconstruction} (DER) algorithm. 
DER first reorders nodes to cluster edges, then employs a noisy quadtree to identify dense regions in the adjacency matrix, and reconstructs edges within these regions.

To mitigate the impact of noise, other works turn to generative graph models.
Mir \etal~\cite{mir2012differentially} used the Stochastic Kronecker Graph model~\cite{mahdian2007stochastic}, by privately estimating its parameters to form a probability distribution over the graph. 
This method then generates a synthetic graph by sampling from the distribution.
Yuan \etal~\cite{yuan2023privgraph} further proposed PrivGraph.
PrivGraph preserves community structure by privately detecting communities and using intra- and inter-community connections to reconstruct the synthetic graph.

\subsection{LDP Graph Analysis}
There has also been a handful of works~\cite{sun2019analyzing,imola2021locally,imola2022communication,ye2020towards,ye2020lf,wei2020asgldp,brito2023global} that have focused on estimating specific graph statistics for particular analytical tasks in the LDP setting.
Sun \etal~\cite{sun2019analyzing} proposed a method for subgraph counting that relies on an extended local view, assuming that each user can see all the connections among their friends.
Ye \etal~\cite{ye2020lf} designed a one-round framework to derive complex target metrics, like clustering coefficient, from two privately collected atomic metrics: the node degree and the adjacency list.
Imola \etal~\cite{imola2022communication} presented the related approaches for counting subgraphs like triangles by leveraging an additional round of interaction between users and the data curator.

\section{Conclusion}
\label{sec:conclusion}
In this paper, we propose \method to address the critical challenge of synthesizing graph data with high fidelity under the strict constraints of local differential privacy.
\method first privately collects the neighborhood structures of the original graph, which in particular are node degrees and triangle counts, then utilizes a specialized reconstruction algorithm to generate a synthetic graph with this information.  
Comprehensive experiments conducted on four real-world datasets demonstrate the superiority of \method compared to the baselines, reflecting its ability to strike a significantly better balance between privacy protection and data fidelity.
This work paves the way for more practical and reliable privacy-preserving graph data analysis in the local view settings. 

\section*{Acknowledgments}
This work was supported by the National Science and Technology Major Project (No. 2026ZD0128300), 
the National Natural Science Foundation of China (No. U23A20296, 62402431, 62441618), 
Zhejiang Province's ``Lingyan'' R\&D Project (No. 2025C01195), 
the Zhejiang Provincial Natural Science Foundation of China (No. LZ26F020002), 
the Natural Science Foundation of Jiangsu Province (No. BK20220075).

\bibliographystyle{IEEEtran}
\bibliography{reference}

\newpage

\appendices

\renewcommand{\thesection}{\Alph{section}}
\renewcommand{\thesubsection}{\thesection.\arabic{subsection}}
\makeatletter
\def\@seccntformat#1{%
  \csname the#1\endcsname\;  %
}
\makeatother

\section{Theoretical Analysis}
\label{section:theoretical_analysis}
In Phase 1, to dynamically determine an effective threshold $\theta$ for triangle counting, we conduct a theoretical analysis by considering the trade-off between projection error and noise error.
Our objective is to find a suitable $\theta$ that minimizes the combined error of the final triangle count estimation.
While a universal closed-form error bound is challenging to derive under LDP due to structural dependencies, we construct a tractable objective function based on heuristic approximations.
The combined error of an estimator $\tilde{t}$ for the triangle count is decomposed into its projection error and noise error as follows:
\begin{align*}
E_{comb} &= E_{proj}+E_{noise}%
\\
&= (\tilde{t}-t)^2 
+ \mathbb{E}\left[Lap(\frac{\theta}{\varepsilon_3})^2\right],
\end{align*}
where the former term is the projection error that captures the information loss incurred when $\theta<d$, while the latter term is the noise error introduced by the Laplace noise added to the estimation.

We first analyze the expectation of the projection error.
When calculating the threshold at the curator, we can only access the noisy degrees.
Furthermore, there is no deterministic formula that links the degrees to the triangle counts.
We thus use $g(d)=\frac{d(d-1)}{2}$ as an analytical proxy for the true triangle count $t$.
There are two reasons.
Firstly, it provides a simple differentiable function of $d$.
Secondly and most importantly, it represents the theoretical maximum of the triangle counting estimation for a node with degree $d$, providing a safe basis.
Therefore, we have the following:
\begin{equation*}
E_{proj} \approx \left(g(\theta)-g(d)\right)^2.
\end{equation*}

To simplify the derivation, let $\theta$ be $\tilde{d}+\tau$, where $\tau$ is a tunable correction factor.
We have $\theta = d+Lap(\frac{1}{\varepsilon_1})+\tau$.
Due to the difficulty of directly integrating $(g(\theta)-g(d))^2$, we use a first-order Taylor approximation of $g(\theta)$ around the point $d$.
Let \(X \sim Lap\left(\frac{1}{\varepsilon_1}\right)\). Then
\(\theta=\tilde{d}+\tau=d+X+\tau\). Since \(g'(d)=d-\frac{1}{2}\), we have
\[
g(\theta)-g(d)
\approx g'(d)(\theta-d)
=
\left(d-\frac{1}{2}\right)(X+\tau).
\]
To find \(E_{proj}\), we compute the expectation conditioned on
\(\theta<d\), i.e., \(X<-\tau\), as the projection error is zero when $\theta \geq d$:
\begin{align*}
E_{proj}
&\approx \left(d-\frac{1}{2}\right)^2
\mathbb{E}\left[(X+\tau)^2 \cdot \mathbf{1}\{X<-\tau\}\right] \\
&= \left(d-\frac{1}{2}\right)^2
\int_{-\infty}^{-\tau} (z+\tau)^2 f_X(z)\mathrm{d}z \\
&= \left(d-\frac{1}{2}\right)^2
\int_{-\infty}^{-\tau} (z+\tau)^2
\frac{\varepsilon_1}{2}e^{\varepsilon_1 z}\mathrm{d}z \\
&= \left(d-\frac{1}{2}\right)^2
\frac{1}{\varepsilon_1^2}e^{-\tau\varepsilon_1}.
\end{align*}

where $\mathbf{1}(\cdot)$ is an indicator function. 
This final term shows that the projection error decreases exponentially as $\tau$ increases. 
A larger $\tau$ raises the projection threshold, making it less likely that the local ego-graph will be pruned, thus reducing the information loss.

The noise error term is the expected squared value of the Laplace noise added to the triangle counting estimation. 
We have the following:
\begin{align*}
E_{noise} &= \mathbb{E}\left[ Lap\left(\frac{\theta}{\varepsilon_3}\right)^2 \right] \\
&= \mathbb{E}\left[ 2 \left( \frac{\theta }{\varepsilon_3} \right)^2 \right] \\
&= \mathbb{E}\left[ 2 \left( \frac{	d+X+\tau }{\varepsilon_3} \right)^2 \right] \\
&= \frac{2}{\varepsilon_3^2} \mathbb{E}\left[ \left(d+X+\tau \right)^2\right] \\
&= \frac{2}{\varepsilon_3^2} \left(d^2 + 2d\tau + \tau^2 + \frac{2}{\varepsilon_1^2}\right),
\end{align*}
This formula shows that the noise error increases as $\tau$ increases, because a larger $\tau$ introduces a higher sensitivity for the triangle counting, thus increasing the noise.

By combining the terms of projection error and noise error, we construct the final error function:
\begin{equation*}
f(\tau)=(d - \frac{1}{2})^2 \frac{1}{\varepsilon_1^2} e^{-\tau \varepsilon_1}+\frac{2}{\varepsilon_3^2} (d^2 + 2d\tau + \tau^2 + \frac{2}{\varepsilon_1^2}).
\end{equation*}
Under the LDP setting, the curator only has access to the noisy degree $\tilde{d}$, thus we use $\tilde{d}$ as a practical proxy of $d$ to calculate the threshold.
This function can be minimized with respect to $\tau$ to balance the two types of error, producing an effective projection threshold.

\section{Algorithm Analysis}
\subsection{Privacy Budget Analysis}
\label{section:privacy_budget_analysis}
Recalling that \method consists of three phases, \ie, initial statistics collection, refined statistics estimation, and global graph reconstruction. 
The first two phases can jointly be referred to as the data collection stage, as they both contact with users' real data.
In data collection stage, each user perturbs and reports degree, adjacency list and triangle counts with $\varepsilon_1$, $\varepsilon_2$ and $\varepsilon_3$ respectively.
The final global graph reconstruction is performed by the curator using only the perturbed data, thus not consuming any privacy budget in accordance with the property of post-processing.
Thus, We have the following theorem.
\setcounter{theorem}{0}
\begin{theorem}
\label{theorem:basic_statistics_collection}
\method satisfies $(\varepsilon_1+\varepsilon_2+\varepsilon_3)$-edge LDP.
\end{theorem}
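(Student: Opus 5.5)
The plan is to establish an edge-LDP guarantee for each of the three local releases made by a single user $u_i$, then combine them by sequential composition. Finally, the curator-side steps (threshold optimization, calibration, consistency restoration, and the whole of Phase 3) are covered by post-processing. Throughout, fix a user $u_i$ and two adjacency lists $l_i, l_i'$ that differ in exactly one bit, as in Definition~\ref{def:edge_ldp}.

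\textbf{Step 1: degree release.} The true degree $d_i=\sum_j l_{i,j}$ changes by at most $1$ between $l_i$ and $l_i'$, so its sensitivity is $1$. Adding $Lap(1/\varepsilon_1)$ is therefore $\varepsilon_1$-edge LDP by the Laplace mechanism.

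\textbf{Step 2: adjacency-list release.} Each bit is perturbed independently by randomized response with flip probability $p=1/(1+e^{\varepsilon_2})$. When $l_i$ and $l_i'$ differ in one bit, the likelihood ratio of any output vector comes only from that coordinate, and it is bounded by $(1-p)/p=e^{\varepsilon_2}$. This release is therefore $\varepsilon_2$-edge LDP. A differing bit outside the reported half $j<i$ only helps, since in that case the outputs have identical distributions.

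\textbf{Step 3: triangle release.} This is the step I expect to be the main obstacle. The user's report is a function of $l_i$ and of $\tilde{M}$, and also of $\theta_i$. Both $\tilde{M}$ and $\theta_i$ are outputs of earlier private releases, so I would condition on them. Conditioned on them, they act as fixed public inputs in an adaptive (interactive) composition, and the interactive form of sequential composition handles this.

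With $\tilde{M}$ and $\theta_i$ fixed, I would bound the sensitivity of the projected counts $t_i=\sum_{j<k}\bar l_{i,j}\bar l_{i,k}\tilde M_{j,k}$ and $s_i=\sum_{j<k}\bar l_{i,j}\bar l_{i,k}$. After \textbf{GraphProjection}, the user has at most $\theta_i$ retained neighbors. Adding or removing one neighbor $j$ changes $s_i$ by at most $\theta_i-1\le\theta_i$, and likewise changes $t_i$ by at most $\theta_i$.

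The delicate points are these. First, random projection means one edge change could alter which neighbors are sampled. I would handle this by coupling the sampling so that the two projected sets differ in at most one element (for example via a fixed random ordering of candidate neighbors), giving a sensitivity of $\theta_i$. Second, both $t_i$ and $s_i$ are released. Since Algorithm~\ref{algorithm:triangle_counting_estimation} adds $Lap(\theta_i/\varepsilon_3)$ to each, I would either argue that the vector $(t_i,s_i)$ has $\ell_1$-sensitivity within the calibrated scale, or, more safely, note that the calibration in \autoref{eq:triangle_calibration} only needs $t_i$ and that $s_i$ can be treated as splitting the budget. If necessary I would state the bound with $\varepsilon_3$ covering the combined release, by halving each noise share. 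The resulting $\tilde t_i$ is then $\varepsilon_3$-edge LDP, and calibration is post-processing.

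\textbf{Step 4: composition.} Sequential composition of the three releases gives $(\varepsilon_1+\varepsilon_2+\varepsilon_3)$-edge LDP for each user. Consistency restoration and the reconstruction of Phase~3 depend only on these outputs, so post-processing preserves the guarantee, which proves Theorem~\ref{theorem:basic_statistics_collection}.
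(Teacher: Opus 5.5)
The gap is in how you handle the second noisy statistic $s_i'$; otherwise your decomposition matches the paper's proof. Like the paper, you use Laplace with sensitivity $1$ for the degree, randomized response for the reported bits, a projection-bounded Laplace release for the triangle statistic, sequential composition, and post-processing for consistency restoration and Phase~3. Your remarks on adaptive composition and on coupling the random projection through a fixed random ordering are correct refinements that the paper's sketch omits. The paper simply asserts that the triangle count has sensitivity at most $\theta$ and never mentions $s_i$, so you were right to raise it, but neither of your resolutions works for the algorithm as written.

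Your first option is false. Suppose the retained set $R$ has $|R|=\lfloor\theta_i\rfloor-1$ and the added neighbor $j$ has $\tilde M_{j,k}=1$ for all $k\in R$. Then $t_i$ and $s_i$ both grow by $|R|$, so the $\ell_1$-sensitivity of $(t_i,s_i)$ is $2(\lfloor\theta_i\rfloor-1)$, which exceeds $\theta_i$ once $\theta_i\ge 3$. Releasing both noisy values with $Lap(\theta_i/\varepsilon_3)$ each therefore costs nearly $2\varepsilon_3$. Your second option rests on a misreading: \autoref{eq:triangle_calibration} subtracts $p\cdot s_i$, so the calibration does need $s_i$. Halving each budget share also doubles the noise, so you would be proving the theorem for a different mechanism.

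The missing observation is that only $\tilde t_i$ is transmitted. With $L_1,L_2$ i.i.d.\ $Lap(\theta_i/\varepsilon_3)$, it equals $\bigl((t_i-p s_i)+L_1-pL_2\bigr)/(1-2p)$, so the private data enter only through the scalar $g_i=t_i-p s_i$. Its sensitivity is below $\theta_i$ in both cases:
\begin{itemize}
\item Adding or removing a retained neighbor $j$ changes $g_i$ by $\sum_{k\in R}(\tilde M_{j,k}-p)$. Since $p<1/2$, its absolute value is at most $(1-p)|R|<\theta_i$.
\item When the projection instead swaps a retained $k_0$ out for $j$, $s_i$ is unchanged and $t_i$ moves by $\sum_{k\in R}(\tilde M_{j,k}-\tilde M_{k_0,k})$, at most $\theta_i-1$ in absolute value. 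This is what your fixed-ordering coupling produces when $d_i\ge\theta_i$, so the two projected sets need not differ by a single added element.
\end{itemize}
Hence $g_i+L_1$ is $\varepsilon_3$-edge LDP for every fixed $(\tilde M,\theta_i)$ and every fixed ordering, and averaging over the ordering preserves the likelihood-ratio bound. Subtracting the data-independent noise $pL_2$ and rescaling by $1/(1-2p)$ is post-processing. This gives $\varepsilon_3$ for the mechanism exactly as specified, with no budget splitting, and your Step~4 then completes the proof.
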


\begin{proof}
\label{proof:basic_statistics_collection}
In the process of initial statistics collection, \method will perturb the actual information with different mechanisms.
The perturbation of degrees is achieved by adding Laplace noise $Lap(\frac{1}{\varepsilon_1})$, where the sensitivity is $1$.
The perturbation of adjacency list is achieved by performing Randomized Response, which is in formula: 
\begin{equation*}
\tilde{l}_i = (RR_{\varepsilon_2}(l_{i,1}),RR_{\varepsilon_2}(l_{i,2}),\cdots,RR_{\varepsilon_2}(l_{i,i-1})).
\end{equation*}

Recalling \autoref{subsec:local differential privacy}, Laplace Mechanism and Randomized Response can provide rigorous local differential privacy guarantee.
Hence, the initial statistics collection process satisfies $(\varepsilon_1+\varepsilon_2)$-edge LDP according to the property of sequential composition.

In the phase of refined statistics estimation, the perturbation of triangle counts is achieved by adding Laplace noise $Lap(\frac{\theta}{\varepsilon_3})$.
The core challenge in protecting the triangle count is the high sensitivity.
A single change to the connections between a user and its neighbors can affect the number of triangles in which it participates by a maximum of the degree of the user.
We set a projection threshold for each node to ensure that the sensitivity will not be violated, \ie, $\theta\geq d'$, where $d'$ represents the degree of the node after projection in the local ego-graph.
Hence, the refined statistics estimation process satisfies $\varepsilon_3$-edge LDP.

According to the property of sequential composition, \method satisfies $\varepsilon$-edge LDP, where the total privacy budget $\varepsilon=\varepsilon_1+\varepsilon_2+\varepsilon_3$.
\end{proof}

\subsection{Complexity Analysis}
\label{section:complexity_analysis}
We provide an analysis of the computational complexity of \method, breaking it down into two primary parts: the workload on the user side and the workload on the curator side.
Let $N$ be the total number of nodes and $M$ be the total number of edges in the original graph.

The computations performed by each user are considered parallel across all users.
On the user side, the process begins with collecting and reporting local statistics.
In the phase of initial statistics collection, each user first counts and perturbs their own degree $d_i$, which takes $O(N)$ time, and then applies Randomized Response to their adjacency list, also taking $O(N)$ time.
In the phase of refined statistics estimation, after receiving feedback from the curator, each user projects their own local ego-graph to ensure that the degree does not exceed the threshold $\theta$, which takes $O(N)$ time.
$2$-star count and triangle count require checking for edges in the perturbed adjacency matrix between all pairs of the user's neighbors, which involves $\frac{d^{\prime}*(d^{\prime}-1)}{2}$ look-ups.
$d^{\prime}$ is upper-bounded by $\theta$.
The worst-case complexity across all users is determined by the user with the largest pruned degree, which could be $O(N^2)$.
In fact, $\theta$ is typically much smaller than $N$, making it highly efficient for most users.
Adding up the complexity of all user-side tasks, we have $O(N^2)$.
For the vast majority of nodes in sparse real-world networks, their low degree keeps this process computationally feasible on the user side.

On the curator side, the complexity consists of the following steps.
Initially, the curator receives noisy degrees and adjacency lists from all users.
It then calculates a threshold for each user, which takes $O(N)$ time, and aggregates the adjacency lists into a noisy adjacency matrix, which also takes $O(N)$ time.
After receiving the noisy triangle counts, the curator performs consistency restoration, which requires a constant-time check and calibration over each user, resulting in a complexity of $O(N)$.

The curator then proceeds to the main graph reconstruction phase.
This involves iterating through each node and assigning it to a group, which is an $O(N)$ operation.
During the intra-group edge construction, the curator iterates through each group and considers building edges between every pair of nodes within the group.
In the worst case, where all nodes are assigned to a single group, the complexity is $O(N^2)$.
In the stage of inter-group edge construction, the curator adds edges between nodes in different groups to satisfy the remaining degree requirement. 
To ensure that nodes with higher degree deficits are prioritized, we employ a weighted random sampling method based on the Fenwick tree. 
Building the structure for each group takes $O(NlogN)$.
At most, we need to sample $M$ pairs of nodes, assuming that we did not build any edge in the stage of intra-group edge construction, which takes $O(MlogN)$.
Considering that most real-world graphs are sparse, it is subsumed by the $O(N^2)$ of the previous step.
Adding up the complexity of all curator side tasks, we have $O(N^2)$.
Actually, the node grouping strategy based on heterogeneous triangle counts effectively prevents the worst-case scenario of a single large group, leading to a much more efficient reconstruction process in practice.

\subsection{Communication Overhead}
\label{section:communication_overhead}
\method's communication overhead consists of two rounds under the LDP framework. 
In the first round, each user uploads their perturbed degree and adjacency list, incurring an $O(N)$ cost. 
The second round introduces the primary bottleneck: 
the curator broadcasts the optimized projection threshold and the aggregated adjacency matrix.
Rather than broadcast a full $O(N^2)$ matrix, we utilize sparse encoding to transmit only the non-zero entries of the aggregated edges.
This ensures that the communication overhead is proportional to the density of the graph, reducing the overhead to $O(dN)$, making \method scalable to large-scale sparse graph.
Finally, each user uploads their single perturbed triangle count at an $O(1)$ cost. 

\section{Extended Experimental Results}
\subsection{Evaluation on Expand Metrics}
\label{section:expanded_metrics}
To provide a more comprehensive assessment of the structural fidelity of the synthetic graphs, we extend our evaluation to include broader graph properties: graph diameter and motif frequency.
Graph diameter serves as a representative metric for global connectivity, while motif frequency captures complex, higher-order local structural patterns.

\begin{itemize}[itemsep=2pt,topsep=2pt,parsep=0pt]
    \item \textbf{Graph Diameter.} 
    We measure the Relative Error (RE) between the diameter of the original graph and the synthesized graph. 
    A lower RE indicates better preservation of macroscopic connectivity and global reachability.
    \item \textbf{Motif Frequency Similarity.} 
    We extract all 3-node and 4-node motifs and compute the Cosine Similarity between the normalized motif frequency vectors of the original and synthetic graphs. 
    A higher similarity implies that the complex, high-order local subgraph structures are well-preserved.
\end{itemize}    

\autoref{fig:expand_metrics} presents the experimental results on the Facebook, Chamelon, and CA-HepPh datasets across varying privacy budgets $\varepsilon\in[0.5, 3.0]$. 
As shown in the top row of \autoref{fig:expand_metrics}, \method consistently outperforms all baselines in terms of Diameter across all evaluated datasets. 
Notably, \dbg exhibits significantly high errors and large variance, indicating its inability to capture long-range paths under LDP constraints. 
\albg and \gen perform relatively better but still fall behind our method. 
In contrast, \method maintains a highly stable and low relative error, demonstrating that our generation model effectively reconstructs the global connectivity of the network.
Regarding higher-order local structures, \method achieves the highest Cosine Similarity for 3- and 4-node motifs across all settings, rapidly converging to and maintaining a high level even under small privacy budgets.
In contrast, baselines struggle significantly to maintain motif consistency.

These demonstrates that our coupled consistency restoration module and generation mechanism effectively reconcile the geometric conflicts introduced by LDP noise, thereby successfully preserving both dense local structures and valid global connectivity.

\begin{figure*}[!t]
    \centering
    \includegraphics[width=0.85\textwidth] {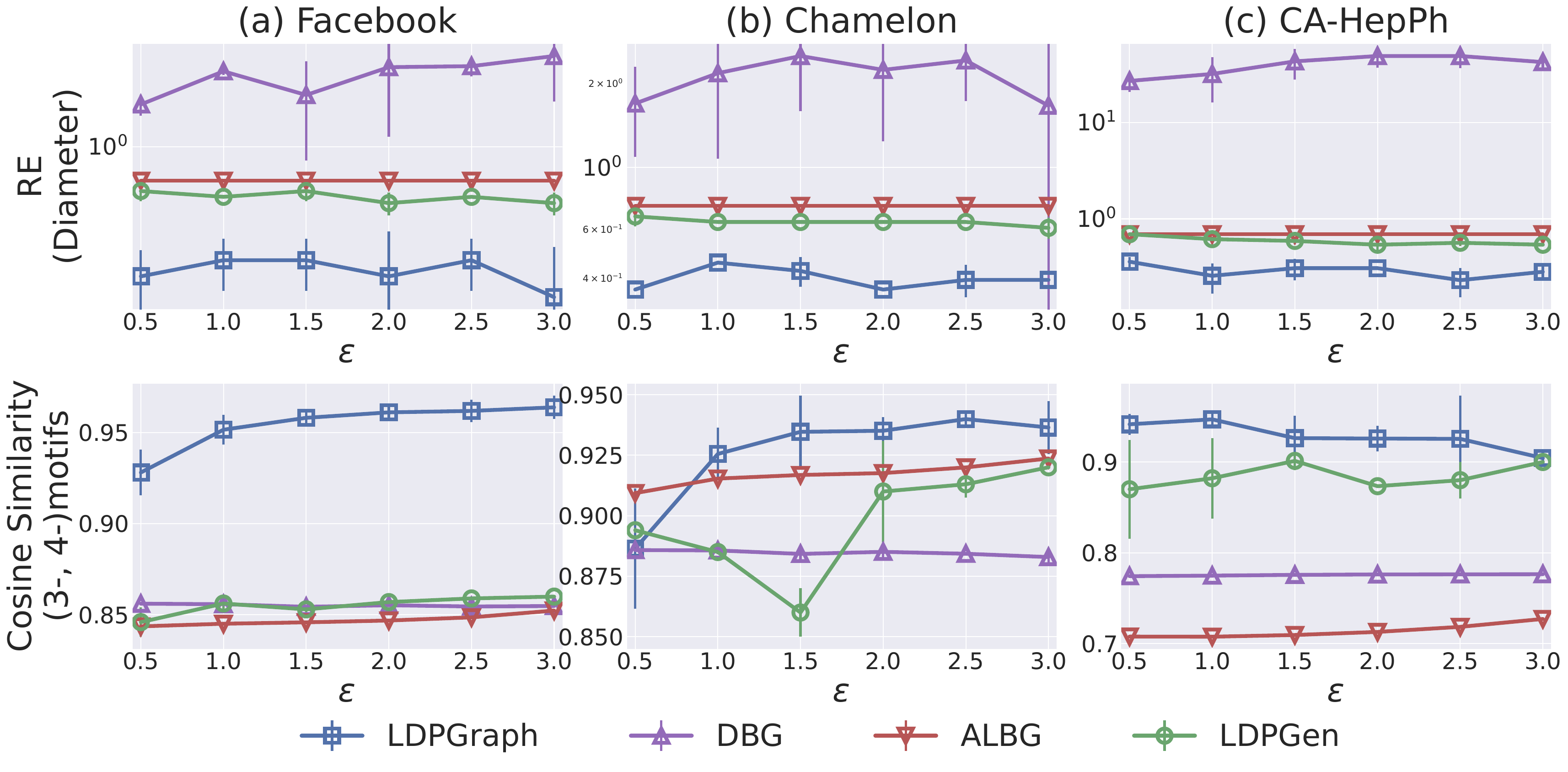}
    \caption{
    Comparison of additional metrics.
    The columns represent the used datasets, and the rows stand for different metrics.
    In each plot, the x-axis denotes the privacy budget $\varepsilon$, and the y-axis denotes the performance.
    For Diameter, lower is better; for (3-,4-) motifs, higher is better.
    }
\label{fig:expand_metrics}
\end{figure*}

\subsection{Results of Privacy Budget Allocation}
\label{section:budget_allocation_results}
\autoref{fig:impact_eps_full} presents a comprehensive view of the impact of privacy budget allocation.
The experimental results reveal that the impact of budget distribution is non-uniform across different graph properties.
Specifically, we observe that certain metrics, such as clustering coefficient and degree distribution, exhibit high sensitivity to the allocation strategy, showing clear performance gradients that necessitate a balanced budget for both local edges and neighborhood structures. 
In contrast, other statistics, such as Eigenvalue Centrality, appear to be relatively robust to the specific budget distribution.
Consequently, our adopted allocation ratio of privacy budget is strategically designed to prioritize the utility of those highly sensitive features while maintaining a competitive and reasonable performance level for the more stable metrics.

\begin{figure*}[!t]
    \centering
    \includegraphics[width=0.94\textwidth] {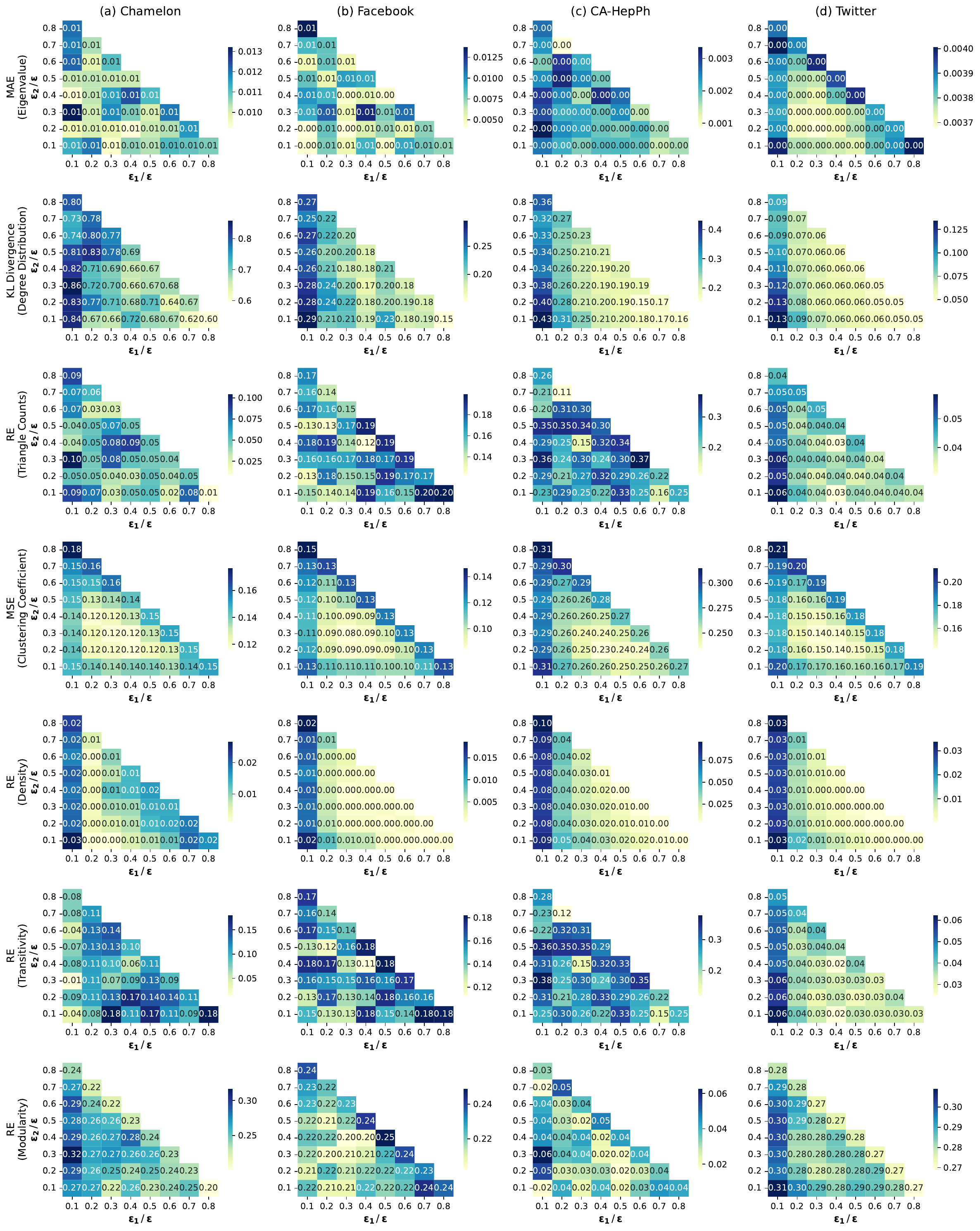}
    \caption{
    Impact of privacy budget allocation. 
    The columns represent the used datasets, and the rows stand for different metrics. 
    In each plot, the x-axis denotes the ratio of $\varepsilon_1$, the y-axis denotes the ratio of $\varepsilon_2$ $(\varepsilon_3=\varepsilon-\varepsilon_1-\varepsilon_2)$, and the values in the grid denote the performance.
    For all grids, lower is better.
    }
\label{fig:impact_eps_full}
\end{figure*}

\end{document}

%% file: mydefs.tex
\usepackage{epsfig,endnotes,xspace,url,diagbox}
\usepackage{amsmath,amsthm,amsfonts}

\usepackage{enumerate}
\usepackage{varioref}
\usepackage{graphicx}

\usepackage{caption}
\usepackage{subcaption}
\usepackage{float}
\usepackage{epstopdf}
\usepackage{xcolor}
\usepackage[most]{tcolorbox}
\usepackage{multirow}
\usepackage{comment}
\usepackage{color}
\usepackage[utf8]{inputenc}
\usepackage{mathtools}
\usepackage{wrapfig}
\usepackage{tabularx}
\usepackage{xspace}
\usepackage[noend, ruled,linesnumbered]{algorithm2e}

\usepackage{hyperref}

\usepackage{cases}
\usepackage{stfloats}
\usepackage{fancyhdr}
\usepackage{booktabs}

\usepackage{enumitem}
\setlist[itemize]{leftmargin=*}

\hypersetup{
  colorlinks,
  linkcolor={blue!70!green},
  citecolor={green!70!blue},
  urlcolor={orange!70!red}
}

\newif\ifcomment
\commenttrue 
\ifcomment
	\newcommand{\says}[2]{\noindent\textcolor{purple}{\textbf{#1 says: }}\textcolor{blue}{#2}\xspace}
	\newcommand{\zzk}[1]{\says{Zhikun}{#1}}
    \newcommand{\mc}[1]{\says{Min}{#1}}	
    \newcommand{\yq}[1]{\says{Quan}{#1}}	
    	
\else  
    \newcommand\says[2]{}
    \newcommand\zzk[1]{}
    \newcommand\mc[1]{}
    \newcommand\yq[1]{}
\fi

\newcommand{\etal}{\textit{et al.}\xspace}
\newcommand{\ie}{\textit{i.e.}\xspace}
\newcommand{\eg}{\textit{e.g.}\xspace}

\newcommand{\etc}{\textit{etc.}\xspace}

\newcommand{\mypara}[1]{\noindent\textbf{#1.} \xspace}

\renewcommand{\Pr}[1]{\ensuremath{\mathsf{Pr}\left[#1\right]}\xspace}

\newtheorem{theorem}{Theorem}

\newtheorem{definition}{Definition}

\definecolor{revision}{RGB}{0,0,0}

\newcommand{\revstart}{\begin{color}{revision}}
\newcommand{\revend}{~\!\!\end{color}}

\newcommand{\method}{\ensuremath{\mathsf{LDPGraph}}\xspace}

\newcommand{\gen}{\ensuremath{\mathsf{LDPGen}}\xspace}

\newcommand{\dbg}{\ensuremath{\mathsf{DBG}}\xspace}
\newcommand{\albg}{\ensuremath{\mathsf{ALBG}}\xspace}

\usepackage{etoolbox}
\makeatletter
\patchcmd{\hyper@makecurrent}{%
    \ifx\Hy@param\Hy@chapterstring
        \let\Hy@param\Hy@chapapp
    \fi
}{%
    \iftoggle{inappendix}{%
        \@checkappendixparam{chapter}%
        \@checkappendixparam{section}%
        \@checkappendixparam{subsection}%
        \@checkappendixparam{subsubsection}%
        \@checkappendixparam{paragraph}%
        \@checkappendixparam{subparagraph}%
    }{}%
}{}{\errmessage{failed to patch}}

\newcommand*{\@checkappendixparam}[1]{%
    \def\@checkappendixparamtmp{#1}%
    \ifx\Hy@param\@checkappendixparamtmp
        \let\Hy@param\Hy@appendixstring
    \fi
}
\makeatletter

\newtoggle{inappendix}
\togglefalse{inappendix}

\apptocmd{\appendix}{\toggletrue{inappendix}}{}{\errmessage{failed to patch}}